\newtheorem{thm}{Theorem}
\newtheorem{lem}[thm]{Lemma}
\newtheorem{cor}{Corollary}
\newtheorem{prop}[thm]{Proposition}
\newtheorem{clm}[thm]{Claim}
\newtheorem{probl}{Problem}
\theoremstyle{remark}
\newtheorem{rem}{Remark}
\let\geq\geqslant
\let\leq\leqslant
\newcommand{\joi}{\ast}
\newcommand{\lin}[1]{\overline{#1}} 
\newcommand{\trop}[1]{\hat{#1}} 
\newcommand{\lenv}[1]{{#1}_{\mathrm{le}} } 
\newcommand{\henv}[1]{{#1}_{\mathrm{he}} } 
\newcommand{\mon}[1]{\mathrm{mon}(#1)} 
\newcommand{\msup}[1]{X_{#1} } 
\newcommand{\Sup}[1]{\mathrm{sup}(#1) } 
\newcommand{\Min}[1]{\mathrm{min}(#1) } 
\newcommand{\MMin}[1]{\mathrm{Min}(#1) }
\newcommand{\m}[1]{|\Min{#1}|} 
\newcommand{\prd}{\times}
\newcommand{\pr}{} 
\newcommand{\CC}{\mathrm{C}} 
\newcommand{\T}{\mathrm{T}}
\newcommand{\C}{\mathrm{C}_{\mbox{\tiny $0/1$}}} 
\newcommand{\dec}{\mathrm{D}} 
\newcommand{\dect}{\mathrm{D}_{\mathrm{thr}}}
\newcommand{\RR}{\mathbb{R}} 
\newcommand{\NN}{\mathbb{N}} 
\newcommand{\F}{\mathrm{F}} 
\newcommand{\klase}[1]{\mathsf{#1}} 
\newcommand{\PERM}[1]{\klase{Per}_{#1}} 
\newcommand{\PPPERM}[1]{\klase{Per}^*{}}
\newcommand{\HP}[1]{\klase{Ham}_{#1}} 
\newcommand{\tR}[1]{\Delta_{#1}} 
\newcommand{\ISOL}[1]{\klase{Isol}_{#1}}
\renewcommand{\PATH}[1]{\klase{Path}_{#1}}
\newcommand{\ALLPATH}[1]{f_{#1}}
\newcommand{\Clique}[1]{\klase{Clique}_{#1}}
\newcommand{\ddeg}[2]{\#_{#2}(#1)}
\newcommand{\llength}{l}
\newcommand{\length}[1]{\llength(#1)}
\begin{document}

\title{Lower Bounds for Monotone Counting Circuits\thanks{Research supported by the DFG grant SCHN~503/6-1.}}

\author{Stasys~Jukna\thanks{Institute of Computer Science, Goethe University, Frankfurt am Main, Germany. Affiliated with Institute of Mathematics and Informatics, Vilnius University, Vilnius, Lithuania. Email: jukna@thi.informatik.uni-frankfurt.de}}

\maketitle
\date{}

\begin{abstract}
A $\{+,\prd\}$-circuit \emph{counts} a given multivariate polynomial $f$, if its values on
$0$-$1$ inputs are the same as those of $f$; on other inputs the circuit may output arbitrary values. Such a circuit counts the number of monomials of $f$  evaluated to $1$ by a given $0$-$1$ input vector (with multiplicities given by their coefficients).
A circuit \emph{decides} $f$ if it has the same $0$-$1$ roots as $f$.
We first show that some multilinear polynomials can be exponentially easier to count than to compute them, and can be exponentially easier to decide than to count them. Then we give general lower bounds on the size  of counting circuits.\\[2ex]

\noindent{\bf Keywords:} arithmetic circuits, boolean circuits, counting complexity, lower bounds
\end{abstract}

\section{Introduction}

In this paper we consider computational complexity of multivariate
polynomials with nonnegative integer coefficients:
\begin{equation}\label{eq:pol}
f(x_1,\ldots,x_n)=\sum_{e\in\NN^n}c_e\prod_{i=1}^n x_i^{e_i}\,,
\end{equation}
where $c_e\in\NN=\{0,1,2,\ldots\}$, and $x_i^0=1$. Products
$\prod_{i=1}^n x_i^{e_i}$ are \emph{monomials} of $f$; we will often omit monomials whose coefficients $c_e$ are zero. The polynomial is \emph{multilinear},
if $c_e=0$ for all $e\not\in\{0,1\}^n$, and is \emph{homogeneous} of degree $d$,
if $e_1+\cdots+e_n=d$ for all $e$ with $c_e\neq 0$.

A standard model of compact representation of such polynomials (with nonnegative coefficients) is that of
monotone arithmetic circuits, i.e. of $\{+,\prd\}$-circuits. Such a circuit
is a directed
acyclic graph with three types of nodes: input,
addition ($+$), and multiplication  ($\prd$). Input nodes have
fanin zero, and correspond to variables $x_1,\ldots,x_n$. All other nodes have fanin
two, and are called \emph{gates}. The \emph{size} of a circuit is
the number of gates in it.

Every $\{+,\prd\}$-circuit syntactically
\emph{produces}  a unique monotone polynomial $F$ with nonnegative integer coefficients in a natural way:
 the polynomial produced at an input gate $x_i$ consists of a single
monomial $x_i$, and the polynomial produced at a sum (product) gate is the sum (product) of
polynomials produced at its inputs; we use distributivity to write a
product of polynomials as a sum of monomials.
The polynomial $F$ produced by the circuit itself is the polynomial produced at its
output gate.
Given a polynomial $f(x_1,\ldots,x_n)$, we say that the circuit
\begin{itemize}
\item \emph{computes} $f$, if $F(a)=f(a)$ holds for all $a\in\NN^n$, where $\NN=\{0,1,2,\ldots\}$;
\item \emph{counts}  $f$, if $F(a)=f(a)$ holds for all $a\in\{0,1\}^n$;
\item \emph{decides}  $f$, if $F(a)=0$ exactly when $f(a)=0$ holds for all $a\in\{0,1\}^n$.
\end{itemize}
In this paper we are mainly interested in $\{+,\prd\}$-circuits \emph{counting} a given polynomial $f$. Such a circuit needs only to correctly compute the restriction
$f:\{0,1\}^n\to \NN$ of $f$ on $0$-$1$ inputs.
Note that,
if the polynomial $f$ is monic (has no coefficients $>1$) then,  on every $0$-$1$
input $a\in\{0,1\}^{n}$, $f(a)$ is the number of monomials of $f$ satisfied by (evaluated to $1$ on)
$a$. For example, in the case of the permanent polynomial
\[
\PERM{n}(x)=\sum_{h}\prod_{i=1}^nx_{i,h(i)}
\]
with the summation over all permutations $h$ of $[n]=\{1,\ldots,n\}$, the value $\PERM{n}(a)$ is the number of
perfect matchings in the bipartite $n\times n$ graph $G_a$ specified by input~$a\in\{0,1\}^{n\times n}$;
nodes $i$ and $j$ are adjacent in $G_a$  if and only if $a_{ij}=1$. Thus, a circuit counting
$\PERM{}$ outputs the number of perfect matchings in $G_a$,
whereas a circuit deciding
this polynomial merely tells us whether $G_a$ contains a perfect matching.

\begin{rem}
Let us stress that we only consider \emph{monotone} arithmetic circuits. The reason is that
counting $\{+,-,\prd\}$-circuits are already omnipotent: they are as powerful as boolean
$\{\lor,\land,\neg\}$-circuits. This is because each boolean operation can be simulated over $\{0,1\}$:
$x\land y$ by $x\pr y$, $\neg x$ by $1-x$, and $x\lor y$ by $x+y-xy$.
\end{rem}

If a $\{+,\prd\}$-circuit computes, counts or only decides a given polynomial $f$, what can then be said about the structure of the produced by the circuit polynomial $F$?

If the circuit
\emph{computes} $f$, then $F=f$ must hold, that is, then
the produced polynomial $F$ and the target polynomial $f$ must
coincide as formal expressions, i.e. as sums of monomials (see, e.g. Claim~\ref{clm:uniq} below for simple a proof).
In particular, then
$\mon{F}=\mon{f}$
must also hold, where
\begin{itemize}
\item $\mon{f}$ is the set of monomials appearing
in $f$ with nonzero coefficients.
\end{itemize}
This ensures that no ``invalid'' monomials can be formed
during the computation, and severely limits the power of such
circuits. In particular, if the target polynomial $f$ is
\emph{multilinear} (no variable has degree larger than $1$, then the
circuit itself must be multilinear: the polynomials produced at inputs
of each product gate must
depend on \emph{disjoint} sets of variables.  This limitation is
essentially exploited in all lower bounds for monotone arithmetic
circuits, including \cite{schnorr,shamir,jerrum,valiant80,snir,gashkov,tiwari,GS}.

\begin{SCfigure}[10]
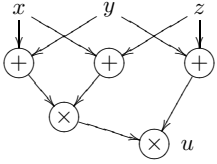

 \scalebox{0.8}{
 \xygraph { !{<0cm,0cm>;<1cm,0cm>:<0cm,-0.9cm>::}
    !{(0,0)}*+{x}="x"
    !{(1.5,0)}*+{y}="y"
    !{(3,0)}*+{z}="z"
    !{(0,1)}*+=[o]+[F]{+}="s1"
    !{(1.5,1)}*+=[o]+[F]{+}="s2"
    !{(3,1)}*+=[o]+[F]{+}="s3"
    !{(0.75,2)}*+=[o]+[F]{\prd}="p1"
    !{(2.25,2.5)}*+=[o]+[F]{\prd}="p2"
    !{(2.8,2.5)}*+={u}="u"
    "x":"s1" "x":"s2" "y":"s1" "y":"s3" "z":"s2" "z":"s3" "s1":"p1"
    "s2":"p1" "s3":"p2" "p1":"p2"
    }
    }
    \caption{A circuit of size $5$ counting the polynomial $f=2xyz+2xy+2xz+2yz$, and deciding the
    polynomial $g=xy+xz+yz$. The circuit itself produces the polynomial $F=(x+y)(y+z)(x+z)=2xyz+x^2y+xy^2+x^2z+xz^2+y^2z+yz^2$. Gate $u$ is the output gate.}
  \label{fig:circuit}
\end{SCfigure}

In counting circuits, $\mon{F}=\mon{f}$ needs not to hold, due to
the multiplicative idempotence axiom $x^2=x$ valid on $0$-$1$ inputs. That is, here exponents (and hence, degrees
of monomials) do not mater (see Fig.~\ref{fig:circuit}). Still, it can be shown (see Lemma~\ref{lem:monic} below) that here we have a weaker, but still strong enough property $\Sup{F}=\Sup{f}$, where
\begin{itemize}
\item $\Sup{f}$ is the \emph{support} of $f$,
that is, the family of sets of variables of monomials in~$\mon{f}$.
\end{itemize}

In deciding circuits, even $\Sup{F}=\Sup{f}$ needs not to hold, due to the additional absorption axiom $x+xy=x$. In such circuits, we only have a weak property $\Min{F}=\Min{f}$, where
\begin{itemize}
\item
$\Min{f}\subseteq \Sup{f}$ is the family of all members of $\Sup{f}$
which are minimal in the sense than they do not contain any other members of $\Sup{f}$;
hence, $\Min{f}$ forms an antichain.
\end{itemize}
Deciding $\{+,\prd\}$-circuits are actually monotone \emph{boolean} circuits, and we have the following relations concerning the minimum circuit size for every given polynomial (we will prove that both gaps can be exponential):
\[
\mbox{Deciding $\leq$ Counting $\leq$ Computing.}
\]
To prove lower bounds for deciding, and hence, also for counting  $\{+,\prd\}$-circuits, one can use lower-bounds arguments
for monotone boolean circuits (see, e.g. \cite[Chapt. 9]{myBFC-book} and the literature cited herein), but these are not easy to apply.
The reason here lies in a ``dual character'' of these arguments: in order to obtain a large lower bound on the decision complexity of a given polynomial $f$, not only the set of monomials of the polynomial $f$ itself but also that of the ``dual'' polynomial $f^*$ must have some good structural properties (see the discussion before Theorem~\ref{thm:main} below).

On the other hand, due to the limitations we mentioned above, lower bounds for
$\{+,\prd\}$-circuits \emph{computing} a given polynomial are
relatively easy to obtain, because here we have a full knowledge about the polynomial which a circuit must produce. In particular, there is then no need to consider dual polynomials. \emph{Counting} $\{+,\prd\}$-circuits allow more freedom, because they
can use $x^2=x$. In this case we only know the structure of the support of the produced polynomial, but not about its monomials. So, it is natural to ask
whether known lower bounds for exactly computing $\{+,\prd\}$-circuits can be extended to counting circuits?

That they sometimes \emph{can} be extended was demonstrated by Sengupta and
Venkateswaran in \cite{sengupta}, where they show that an exponential
lower bound of Jerrum and Snir \cite{jerrum} for $\{+,\prd\}$-circuits
computing the permanent polynomial $\PERM{}$ can be adopted to yield
the same lower bound for circuits only counting this
polynomial. Still, at least three questions remained open:
\begin{enumerate}
\item Can counting circuits be substantially
  smaller than computing circuits?

  \item Can deciding circuits be substantially
  smaller than counting circuits?

\item Can lower-bounds \emph{arguments} for computing $\{+,\prd\}$-circuits, not just bounds for specific
  polynomials (like the permanent polynomial), be extended to $\{+,\prd\}$-counting circuits?
\end{enumerate}
In this paper, we answer these questions affirmatively.

\section{Results}

For a polynomial $f$, let $\CC(f)$ denote the minimum size of a
$\{+,\prd\}$-circuit \emph{computing} $f$,  $\C(f)$ the minimum size of such a circuit
\emph{counting} $f$, and  $\dec(f)$ the minimum size of a
$\{+,\prd\}$-circuit \emph{deciding} $f$.
Note that, for every polynomial $f$, we have that
\[
\dec(f)\leq \C(f)\leq \CC(f)\,.
\]
We first show that the gaps $\CC(f)/\C(f)$ as well as $\C(f)/\dec(f)$ can be exponential.
When doing this, we will use known lower bound for the permanent polynomial.

\begin{thm}[\cite{jerrum,sengupta}]\label{thm:perm}
If $f=\PERM{n}$, then $\C(f)\geq n2^{n-1}$.
\end{thm}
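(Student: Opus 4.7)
My plan is to adapt the classical Jerrum--Snir lower bound for monotone arithmetic circuits \emph{computing} the permanent to the \emph{counting} setting, following Sengupta and Venkateswaran. The starting point is the forthcoming Lemma~\ref{lem:monic}, which yields $\Sup{F}=\Sup{\PERM{n}}$ for the polynomial $F$ produced by any counting circuit: the supports of the monomials of $F$ are exactly the $n!$ permutation matchings of the $n\times n$ grid. Because the circuit is monotone (no cancellations), this structural constraint propagates inward: every monomial produced at any intermediate gate has support contained in some permutation matching, hence is a \emph{partial matching} of the grid.

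The heart of the argument is a balancing step. For each perfect matching $\sigma$ I fix a monomial of $F$ with support $\sigma$ and a parse tree producing it, and descend from the root of the parse tree by taking, at each $\joi$-gate, the child whose monomial has the larger support. The support size is non-increasing along this walk, starting at $n$ and ending at $1$, so there is a unique first $\joi$-gate $g_\sigma=v\joi w$ at which the size of the heavier child drops below $\lceil n/2\rceil$. At $g_\sigma$ the children contribute monomials whose partial-matching supports $A_v,A_w$ have sizes in a balanced window and whose union is a sub-matching of $\sigma$ containing its heavier half.

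A double-counting argument then finishes the proof. Each perfect matching $\sigma$ is charged to exactly one product gate $g_\sigma$. On the other side, for each product gate $g=v\joi w$ the number of $\sigma$ with $g_\sigma=g$ is controlled by the fact that the pair $(A_v,A_w)$ must consist of balanced-size partial matchings, and once $A_v\cup A_w$ is fixed the number of completions to a perfect matching of $[n]\times[n]$ is bounded by a factorial on the uncovered rows and columns. Summing the per-gate bound and comparing with $n!$ yields the claimed lower bound $\C(\PERM{n})\geq n\,2^{n-1}$.

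I expect the main obstacle to be the non-multilinearity at intermediate gates: in the Jerrum--Snir argument for computing circuits, the supports $A_v$ and $A_w$ at any product gate are forced to be disjoint by multilinearity, but this disjointness can fail for counting circuits. The partial-matching observation above is exactly what rescues the argument: even when $A_v$ and $A_w$ share edges, their set-union still respects the row/column disjointness of matchings, so both the descent step and the per-gate count carry over from the computing setting to the counting setting essentially unchanged.
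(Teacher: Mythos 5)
First, note that the paper does not prove this theorem at all: it is imported from Jerrum--Snir and Sengupta--Venkateswaran, and the text explicitly offers only a short proof of the \emph{weaker} bound $\C(\PERM{n})\geq\binom{n}{r}=2^{\Omega(n)}$ via Corollary~\ref{cor:bounds1}. So there is no in-paper proof to match your attempt against; what can be judged is whether your outline actually delivers the stated constant $n2^{n-1}$.

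It does not. Your plan --- use $\Sup{F}=\Sup{\PERM{n}}$ from Lemma~\ref{lem:monic}, extract one balanced product gate from a parse tree of each permutation, and bound the number of permutations chargeable to a fixed gate by a factorial on the uncovered rows and columns --- is precisely the balanced-decomposition argument that the paper formalizes in Lemma~\ref{lem:balan} and Theorem~\ref{thm:main} and then applies in Corollary~\ref{cor:bounds1}. The per-gate count in that argument is $r!\,(n-r)!$ for some $r$ in the balanced window, so the best lower bound the double count can produce is $\min_r\binom{n}{r}$ over that window, i.e.\ roughly $\binom{n}{n/3}\approx 2^{0.918n}$ (and with your threshold $\lceil n/2\rceil$ the heavier child's support size is only guaranteed to lie in $[n/4,n/2)$, which is worse). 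This is exponential but asymptotically \emph{smaller} than $n2^{n-1}$, and no choice of window repairs it: even a perfect split $r=n/2$ yields only $\binom{n}{n/2}=\Theta(2^n/\sqrt{n})$. The exact bound $n2^{n-1}$ in Jerrum--Snir comes from a genuinely different accounting --- an induction that charges many distinct multiplication gates to each permutation, not a single balanced cut per permutation --- and Sengupta--Venkateswaran's contribution is to push \emph{that} finer argument through the loss of multilinearity, which requires more than the partial-matching observation you make. In short, your sketch is a correct route to $2^{\Omega(n)}$ (the bound the paper itself derives and actually uses), but it cannot reach the constant stated in the theorem, so as a proof of the statement as written it has a genuine quantitative gap.
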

This lower bound on $\CC(f)$ was proved by Jerrum and Snir \cite{jerrum},
and was extended to $\C(f)$ by
Sengupta and Venkateswaran \cite{sengupta} (see also Corollary~\ref{cor:bounds1} below for a short
proof of a weaker $2^{\Omega(n)}$ lower bound).

We will also use the (simple) fact that it is not harder to compute the so-called ``lower'' and
``higher'' envelopes of polynomial than to compute the polynomial itself.
The \emph{lower envelope} of a polynomial $f$ is a homogeneous polynomial
$\lenv{f}$ consisting of the monomials of $f$ of smallest degree. The \emph{higher envelope}
$\henv{f}$ is defined by taking monomials of largest degree. (As
usually, the degree of a monomial is the sum of exponents of its
variables, and a polynomial is \emph{homogeneous}, if all its monomials have the same degree.) As observed already by Jerrum and Snir \cite{jerrum}, every
$\{+,\prd\}$-circuit producing a polynomial $f$ can be easily
transformed into a circuit producing $\lenv{f}$ or $\henv{f}$ by just discarding (if
necessary) some of the sum-gates. Hence, we always have
\begin{equation}\label{eq:env}
\CC(f)\geq \max\left\{\CC(\henv{f}), \CC(\lenv{f}\right\}\,.
\end{equation}

\subsection{Gaps}

To show that the gap $\CC(f)/\C(f)$ can be exponential, we will show a stronger fact that both
gaps $\C(\henv{f})/\C(f)$ and $\C(\lenv{f})/\C(f)$ can be exponential. Recall that,
by \eqref{eq:env}, no such gap is possible for
\emph{computing} $\{+,\prd\}$-circuits.

\begin{thm}\label{thm:main0}
There are multilinear polynomials $f$ and $g$ of $n$ variables such that
$\C(f)=O(n)$ and $\C(g)=O(n^{3/2})$, but
 both
$\C(\henv{f})$ and $\C(\lenv{g})$ are $2^{\Omega(\sqrt{n})}$.
\end{thm}

\begin{rem} Together with \eqref{eq:env}, the theorem implies that the gap $\CC(f)/\C(f)$ between the sizes of $\{+,\prd\}$-circuits computing and counting $f$ can be exponential.
Important in this result is that the gap is obtained for
\emph{multilinear} polynomials: this shows that, under the presence of
multiplicative idempotence $x^2=x$, non-multilinear circuits counting
multilinear polynomials can be much more efficient.
In this connection, let us mention that without this restriction (to multilinear
polynomials) a non-trivial gap follows from the classical lower bound
$\Omega(n\log d)$ of Strassen~\cite{strassen1}, and Baur and
Strassen~\cite{BS83} on the size of arithmetic (not necessarily
monotone) circuits computing the polynomial
$f=x_1^d+x_2^d+\cdots+x_n^d$, which can be trivially counted by a
$\{+,\prd\}$-circuit $F=x_1+x_2+\cdots+x_n$ of size only $n-1$.
But this example merely says that, under the presence of
multiplicative idempotence $x^2=x$, rising to powers is redundant.
\end{rem}

To show that the gap $\C(f)/\dec(f)$ can also be exponential, it is enough to take
any polynomial $g(x_1,\ldots,x_n)$ such that $\C(g)$ is exponential,
and consider the polynomial $f=g+h$ where $h=\sum_{i=1}^n x_i$. If $g(0,\ldots,0)=0$ then, on every $0$-$1$ input $a$, we have that $f(a)=0$ if and only if $h(a)=0$. So, $f$ has a small
decision complexity: $\dec(f)\leq \dec(h)\leq n$. So, if the counting complexity $\C(f)$
of the extended polynomial $f$ remains exponential, then the gap $\C(f)/\dec(f)$ is exponential.
In particular, one can establish such a gap by using the permanent
polynomial $g=\PERM{}$ (the only small ``technicality'' here is to show that
the counting complexity of $f$ remains large).

\begin{thm}\label{thm:main01}
If $f=\PERM{n}+\sum_{i,j=1}^n x_{ij}$,
then $\dec(f)\leq n^{2}$ but $\C(f)=2^{\Omega(n)}$.
\end{thm}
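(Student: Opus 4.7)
I would split into the easy upper bound and the delicate lower bound.

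For $\dec(f)\leq n^2$: observe that on $\{0,1\}^{n\times n}$ both $\PERM{n}(a)$ and $\sum_{ij}a_{ij}$ are non-negative integers, so $f(a)=0$ iff every $a_{ij}=0$, iff $h(x):=\sum_{ij}x_{ij}$ vanishes at $a$. Any summation tree for $h$ has size $n^2-1$ and therefore already decides $f$.

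For the lower bound, the idea is to reduce to the Sengupta--Venkateswaran bound of Theorem~\ref{thm:perm} via a structural analysis of the produced polynomial. Let $\Phi$ be a counting circuit for $f$ of size $s$, and let $F$ be the polynomial it produces. By Lemma~\ref{lem:monic}, $\Sup{F}=\Sup{f}$, and $\Sup{f}$ is the disjoint union of the $n^2$ singletons $\{x_{ij}\}$ with the $n!$ permutation sets $\{x_{1,\pi(1)},\ldots,x_{n,\pi(n)}\}$. Since $F$ agrees with $f$ on $\{0,1\}^{n\times n}$ and $f$ is monic and multilinear, the multilinearization $\tilde F$ equals $f$; matching coefficients on each support class separately forces
\[
F \;=\; \sum_\pi\, \prod_{i=1}^n x_{i,\pi(i)}^{k_i^\pi} \;+\; \sum_{ij}\, x_{ij}^{k_{ij}}
\]
for some positive integer exponents $k_i^\pi$ and $k_{ij}$. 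Thus $F$ contains, for every permutation $\pi$, exactly one monomial $m_\pi$ of coefficient one and variable support equal to the graph of $\pi$.

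It then remains to show that any $\{+,\prd\}$-circuit producing a polynomial that contains such a permutation family has size $2^{\Omega(n)}$. This is the content of the Jerrum--Snir / Sengupta--Venkateswaran argument: it depends only on the variable supports of the target monomials, so the specific exponents $k_i^\pi$ inside each $m_\pi$ are irrelevant. The main obstacle---the ``small technicality'' the paper alludes to just before the statement---is to check that the extra singleton monomials $x_{ij}^{k_{ij}}$ cannot be exploited to reduce the count. The key point is that at every product gate $u\cdot v$, each permutation monomial factors as $m_\pi=m_\pi^u\cdot m_\pi^v$ whose supports partition the $n$-element graph of $\pi$, a partition that is never compatible with a singleton support on either side; so the combinatorial bookkeeping over permutations proceeds without interference from the singletons. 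Either a direct adaptation of \cite{sengupta}, or Corollary~\ref{cor:bounds1} below (which already supplies the weaker $2^{\Omega(n)}$ bound for $\C(\PERM{n})$), should then transfer essentially verbatim to $F$ and yield $s\geq 2^{\Omega(n)}$.
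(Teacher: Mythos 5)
Your upper bound is exactly the paper's: $g=\PERM{n}$ vanishes at $0$, so $f(a)=0$ iff $h(a)=0$ for $h=\sum_{i,j}x_{ij}$, and the summation tree for $h$ decides $f$ with at most $n^2$ gates. Your structural analysis of $F$ is also correct (and in fact slightly stronger than needed: since $\lin{F}=\lin{f}$ by Lemma~\ref{lem:monic} and coefficients are nonnegative integers, each permutation support carries exactly one monomial of $F$ with coefficient one). The paper, however, only ever uses $\Sup{F}=\Sup{f}$.

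The gap is in the step where you dismiss the singletons. Saying that ``the Jerrum--Snir bookkeeping proceeds without interference'' is precisely the assertion to be proved, and your justification for it does not address the actual danger. The issue is not how a permutation monomial $m_\pi$ factors at a product gate; it is that in the decomposition $\Sup{F}=\bigcup_u\Sup{P_u\pr Q_u}$ a single balanced product $P\pr Q$ with $\Sup{P\pr Q}\subseteq\Sup{f}$ could \emph{a priori} contain, besides balanced factorizations of permutation graphs, additional small sets $a'\in\Sup{P}$, $b'\in\Sup{Q}$ whose union is one of the singletons of $h$; if that were possible, the upper bound $|\Sup{P\pr Q}|\leq\ddeg{g}{r}\cdot\ddeg{g}{n-r}$ on the number of permutations covered by one product would no longer follow, and the count of $n!/\binom{n}{r}^{-1}$ products would collapse. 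The paper rules this out by a cross-term argument (Claim~\ref{clm:bal}): if $a\cup b$ is a permutation graph with $n/3<|a|\leq 2n/3$ and $a'\cup b'\in\Sup{h}$ (so $|b'|<n/3$), then $a\cup b'$ has size strictly between $n/3$ and $n$, hence lies in neither $\Sup{g}$ nor $\Sup{h}$, contradicting $\Sup{P\pr Q}\subseteq\Sup{f}$. This is packaged as Theorem~\ref{thm:main} (via Lemmas~\ref{lem:balan}(ii) and~\ref{lem:sharp}), and the paper's entire proof of Theorem~\ref{thm:main01} is the three-line invocation of that theorem with $g=\PERM{n}$, $h=\sum_{i,j}x_{ij}$, $m=n$, giving $\C(f)\geq n!/r!(n-r)!=\binom{n}{r}=2^{\Omega(n)}$ for some $n/3<r\leq 2n/3$. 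Note also that Corollary~\ref{cor:bounds1} cannot be applied ``essentially verbatim'' to your $F$: it concerns the pure polynomial $f_H$ and is itself a consequence of Theorem~\ref{thm:main}; the clean route is to use Theorem~\ref{thm:main} directly rather than to re-derive the permanent bound for the augmented polynomial.
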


The polynomial used in this theorem is somewhat artificial.
 Actually, one can establish an exponential gap using a more natural (and important) $s$-$t$ path polynomial $\PATH{n}$.
 This polynomial has one variable $x_{i,j}$ for each edge of a complete undirected graph on $n+2$ nodes $\{s,1,2,\ldots,n,t\}$.  Each monomial of $f$ corresponds to a simple directed path from node $s$ to node $t$:
\[
\PATH{n}(x)=x_{s,t}+\sum_{l=1}^n\sum_{\substack{i_1,\ldots,i_l\\
 \mathrm{distinct}} } x_{s,i_1}x_{i_1,i_2}\cdots x_{i_{l-1},i_l}x_{i_l,t}\,.
\]
On a $0$-$1$ input $a$, $\PATH{n}(a)$ gives the number of $s$-$t$ paths in the graph specified by $a$.
Jerrum and Snir~\cite{jerrum} have shown that every $\{+,\prd\}$-circuit \emph{computing} $f=\PATH{n}$
must have exponential size, i.e. that $\CC(f)=2^{\Omega(n)}$.
We show that even $\{+,\prd\}$-circuits \emph{counting} $\PATH{}$
must have exponential size.

\begin{thm}\label{thm:main02}
If $f=\PATH{n}$, then $\dec(f)=O(n^3)$, but
$\C(f)\geq 2^{n^{\Omega(1)}}$.
\end{thm}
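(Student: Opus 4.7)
The upper bound $\dec(\PATH{n}) = O(n^3)$ is a routine translation of reachability dynamic programming into a monotone arithmetic circuit. For each vertex $v$ and each stage $k = 0, 1, \ldots, n$, I would set $R_0(v) = x_{s,v}$ and recursively
\[
R_{k+1}(v) = R_k(v) + \sum_{u \neq v} R_k(u)\, x_{u,v},
\]
and output $R_n(t)$. The produced polynomial is nonzero on a $0$-$1$ input iff the input graph contains some $s$-$t$ walk, equivalently a simple $s$-$t$ path. Each of the $n$ stages uses $O(n^2)$ gates, for a total of $O(n^3)$.

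The lower bound $\C(\PATH{n}) \geq 2^{n^{\Omega(1)}}$ is the substantive content. The starting point is Lemma~\ref{lem:monic}: if a $\{+,\prd\}$-circuit counts $\PATH{n}$, then the produced polynomial $F$ satisfies $\Sup{F} = \Sup{\PATH{n}}$, i.e., the variable-supports of monomials of $F$ are precisely the edge-sets of simple $s$-$t$ paths in the complete graph on $\{s,1,\ldots,n,t\}$. I would then invoke the paper's general lower bound for counting circuits (Theorem~\ref{thm:main}), which converts any counting circuit of size $s$ into a decomposition of $\Sup{\PATH{n}}$ by $O(s)$ ``support-rectangles'' of the form $\Sup{g_1} \joi \Sup{g_2} := \{A \cup B : A \in \Sup{g_1},\ B \in \Sup{g_2}\}$ arising from product gates $g = g_1 \prd g_2$, and reduces the task to a combinatorial covering bound for the simple-path family.

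The main obstacle is precisely this final combinatorial estimate. Theorem~\ref{thm:main0} warns that the presence of short monomials (such as the single-edge path $x_{s,t}$) can, for contrived polynomials, let counting circuits bypass the computing lower bound on the higher envelope; so one has to verify that the family of simple $s$-$t$ paths is ``rigid'' enough to preclude such a shortcut. The natural approach is to zoom in on the Hamiltonian $s$-$t$ sub-paths inside $\Sup{\PATH{n}}$: fixing a balanced partition of the intermediate vertex set and a crossing size $k = n^{\Omega(1)}$, any support-rectangle that covers such a path must align with a split at the cut, and counting the distinct admissible ``middles'' lower-bounds the number of rectangles by $2^{n^{\Omega(1)}}$. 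Alternatively, one may reduce to the permanent bound of Theorem~\ref{thm:perm} through an encoding of $m\times m$ bipartite-matching instances as $s$-$t$ path instances on $O(m^2)$ vertices; either route yields the claimed bound, with the first staying closer to the paper's general machinery.
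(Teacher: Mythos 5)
The upper bound is fine and matches the paper's Bellman--Ford construction. The lower bound, however, has a genuine gap, and in fact your ``first route'' runs head-on into what the paper explicitly leaves as an open problem. Theorem~\ref{thm:main} cannot be invoked for $f=\PATH{n}$: it requires a decomposition $f=g+h$ in which every monomial of $g$ has length at least $m$ and every monomial of $h$ has length below $m/3$, whereas $\PATH{n}$ has monomials of every length $1,2,\ldots,n+1$, so no such length gap exists for any choice of $m$. This is not a technicality: the gap is what lets Claim~\ref{clm:bal} conclude that a balanced product's support lies entirely inside $\Sup{g}$, and without it a single product $g\pr h$ with $\Sup{g\pr h}\subseteq\Sup{\PATH{n}}$ may mix Hamiltonian paths with short paths, so that ``counting the admissible middles'' no longer bounds the number of Hamiltonian paths per rectangle. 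The paper states plainly that it has no direct proof and poses exactly such a direct argument as Problem~2. Your ``alternative route'' --- encoding permanent instances as path instances --- is asserted without any construction; no monotone projection of $\PERM{m}$ onto $\PATH{n}$ is exhibited, and producing one is not routine.

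The paper's actual argument is entirely different and essentially reduction-theoretic. It introduces the notion of a circuit \emph{deciding with threshold} $T$ and the measure $\dect(f)$, then chains three known facts: (1) $\dect(\Clique{n,r})\geq 2^{n^{\Omega(1)}}$ for $r\approx\sqrt{n}$, which rests on exponential lower bounds for monotone \emph{real} circuits; (2) $\Clique{n,r}$ is a monotone projection of the Hamiltonian path polynomial $\HP{m}$ with $m=O(n^2)$ (Valiant, Alon--Boppana), so $\dect(\HP{m})$ is also exponential; and (3) a gap-amplification gadget --- replacing each edge by $2^p$ internally disjoint paths of length $p+1$ with $p=(n+1)\log n$ --- which ensures that graphs with a Hamiltonian $s$-$t$ path have at least $T=(2^p)^{n+1}$ paths while graphs without one have at most $T/n$, so that any circuit \emph{counting} $\PATH{m}$ decides $\HP{n}$ with threshold $T$, giving $\C(\PATH{m})\geq\dect(\HP{n})$. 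None of these three ingredients, nor the threshold-decision notion that glues them together, appears in your proposal, so the lower bound as you sketch it does not go through.
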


\subsection{Lower bounds}

Recall that, if a $\{+,\prd\}$-circuit \emph{computes} a given polynomial $f$,
then the produced by the circuit polynomial $F$ must just coincide with $f$
(as formal expressions). In counting and deciding circuits we only have weaker conditions on~$F$.

By the \emph{linearization} of a polynomial $f$ we will mean a multilinear polynomial
$\lin{f}$ obtained from $f$ by removing all (nonzero) exponents from all monomials of $f$. For example,
the linearization of $f=2xy^2+ 3x^4y^2 + 6y^2z$ is $\lin{f}=5xy + 6yz$.
It is clear that $\lin{f}(a)=f(a)$ holds for all $a\in\{0,1\}^n$.

\begin{lem}\label{lem:monic}
If a $\{+,\prd\}$-circuit producing a polynomial $F$ counts
$f$, then $\lin{F}=\lin{f}$, and hence, also $\Sup{F}=\Sup{f}$. A $\{+,\prd\}$-circuit
decides $f$ if and only if $\Min{F}=\Min{f}$.
\end{lem}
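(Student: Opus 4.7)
The lemma splits into three independent implications: (i) counting $\Rightarrow$ $\lin{F}=\lin{f}$; (ii) $\Min{F}=\Min{f}$ $\Rightarrow$ deciding; (iii) deciding $\Rightarrow$ $\Min{F}=\Min{f}$. The support equality $\Sup{F}=\Sup{f}$ in the first claim is immediate from $\lin{F}=\lin{f}$, since the support of any polynomial coincides with the support of its linearization.

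For (i), the plan is to observe that for \emph{any} polynomial $g$ and any $a\in\{0,1\}^n$, the identity $x_i^k=x_i$ on $\{0,1\}$ yields $g(a)=\lin{g}(a)$. Hence the counting hypothesis $F(a)=f(a)$ on $\{0,1\}^n$ transfers to $\lin{F}(a)=\lin{f}(a)$ on $\{0,1\}^n$, an equality of two multilinear polynomials on the hypercube. I would then invoke the standard fact that a multilinear polynomial is uniquely determined by its values on $\{0,1\}^n$ — either by induction on the number of variables (a multilinear polynomial vanishing on the cube must vanish identically), or by M\"obius inversion recovering each coefficient as a signed sum of cube-values. This gives $\lin{F}=\lin{f}$ as formal polynomials.

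For (ii) and (iii) the key observation is that, because all coefficients of $F$ and $f$ are nonnegative, for such a polynomial $g$ and any $a\in\{0,1\}^n$, one has $g(a)>0$ iff some $S\in\Sup{g}$ is contained in $\{i:a_i=1\}$, and, since every element of $\Sup{g}$ dominates an element of the antichain $\Min{g}$, iff some $S\in\Min{g}$ is contained in $\{i:a_i=1\}$. Thus the zero-pattern of $g$ on the cube is encoded exactly by $\Min{g}$, which immediately gives (ii). For (iii) I would run a short ``ping-pong'' between $F$ and $f$: take any $T\in\Min{f}$ and let $a$ be its characteristic vector, so $f(a)>0$ forces $F(a)>0$, hence some $S\in\Min{F}$ with $S\subseteq T$; now let $b$ be the characteristic vector of $S$, so $F(b)>0$ forces $f(b)>0$, hence some $T'\in\Min{f}$ with $T'\subseteq S\subseteq T$. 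Antichain-minimality of $T$ in $\Min{f}$ yields $T'=T$, hence $S=T$, so $T\in\Min{F}$; the reverse inclusion is obtained by swapping $F$ and $f$.

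There is no real obstacle: the whole lemma rests on two elementary facts — uniqueness of multilinear representation on $\{0,1\}^n$ and the correspondence between zero-sets on $\{0,1\}^n$ and downward closures of antichains in the monotone world. The only step requiring any care is the ping-pong in (iii), where one must apply the decision hypothesis in \emph{both} directions along the chain $T\supseteq S\supseteq T'$ to force equality.
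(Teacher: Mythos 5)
Your proposal is correct and follows essentially the same route as the paper: the first part reduces to the uniqueness of a multilinear polynomial given its values on $\{0,1\}^n$ (the paper's Claim~\ref{clm:uniq} specialized to $S_i=\{0,1\}$), and the second part uses characteristic-vector inputs together with the antichain property of $\Min{f}$, your direct ``ping-pong'' being just a reorganization of the paper's proof by contradiction.
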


Our next structural result is the following lemma.
The \emph{support} of a monomial is the set of variables appearing in it with nonzero degree; the size of this set is the \emph{length} of the monomial.  A product $g\pr h$ of two polynomials is \emph{$m$-balanced}, if  the minimum length $l$ of one these polynomials
satisfies  $m/3<l\leq 2m/3$.
A monomial $p$ \emph{appears} $m$-\emph{balanced} in a product $g\pr h$ of two polynomials,
if there are monomials $r\in\mon{g}$ and $s\in\mon{h}$ such that
$rs$ and $p$ have the same support, and the length $l$ of $r$ satisfies
$m/3<l\leq 2m/3$.
Note that here the order of polynomials in their product $g\pr h$ is
important: the condition is only on parts of monomials appearing in the first polynomial.
In particular, if several monomials appear $m$-balanced in $g\pr h$, then we know the bounds on the lengths of their parts in \emph{one and the same} of the two polynomials.

\begin{lem}\label{lem:balan}
Let $m\geq 2$, and let $f$ a polynomial of counting complexity $\C(f)=s$.
\begin{itemize}
\item[\mbox{\rm (i)}] If every monomial of $f$ has length at least $m$, then $\Sup{f}$ is a union of at most $s$ supports of $m$-balanced products of polynomials.
\item[\mbox{\rm (ii)}] There are $s$  products $g\pr h$ of polynomials such that $\Sup{g\pr h}\subseteq \Sup{f}$, and
every monomial of $f$ of length at least $m$
appears $m$-balanced in at least one of these products.
\end{itemize}
\end{lem}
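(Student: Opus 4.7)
My plan is to use the classical parse-tree descent in the spirit of Jerrum and Snir~\cite{jerrum}. Fix a size-$s$ counting circuit $C$ for $f$ producing the polynomial $F$; by Lemma~\ref{lem:monic}, $\Sup{F}=\Sup{f}$, so every $I\in\Sup{f}$ is realised as $\Sup{P}$ for some $P\in\mon{F}$. The heart of the argument is the following balancing claim: for every monomial $P\in\mon{F}$ with $|\Sup{P}|\geq m$, there exist a product gate $v=v_1\prd v_2$ of $C$ and monomials $R\in\mon{F_{v_1}}$, $S\in\mon{F_{v_2}}$ such that $\Sup{RS}=\Sup{P}$ and $|\Sup{R}|\in(m/3,2m/3]$.

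To prove this claim, I would fix a parse tree $T$ of $P$, that is, a subtree of $C$ obtained by selecting one input at each sum gate and keeping both inputs at each product gate, whose leaves multiply to $P$. Writing $\sigma(u)=|\Sup{P_u}|$ for the support size of the monomial $P_u$ produced at node $u$ of $T$, we have $\sigma(\mathrm{root})\geq m$ and $\sigma(\mathrm{leaf})=1$. I descend from the root, at each product node taking the child with larger $\sigma$ and at each sum node following the unique edge, and locate the first node $w$ with $\sigma(w)\leq 2m/3$: its parent must be a product node $v=w'\prd w''$, and the inequality $\sigma(v)\leq\sigma(w)+\sigma(w'')\leq 2\sigma(w)$ forces $\sigma(w)>m/3$. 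Setting $R=P_w$ and $S=P_{w''}$ gives a balanced split at $v$; to upgrade the free inclusion $\Sup{RS}\subseteq\Sup{P}$ to the required equality, I would choose the sum-gate branches of $T$ so that every variable of $\Sup{P}$ is contributed by some leaf inside the subtree of $v$, exploiting $\Sup{F}=\Sup{f}$ to guarantee that such a parse tree exists.

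With the balancing claim in hand, part~(ii) follows by forming, for each product gate $v=v_1\prd v_2$ of $C$, a product $g_v\prd h_v$ built from $F_{v_1}\prd F_{v_2}$ by restricting each factor to the monomials occurring in a balanced factorisation of some long monomial of $f$. The restriction ensures $\Sup{g_v\prd h_v}\subseteq\Sup{f}$, and the balancing claim ensures that each long monomial of $f$ appears $m$-balanced in at least one of these at most $s$ products. Part~(i) uses the same machinery under its stronger hypothesis: for each $I\in\Sup{f}$ I fix a representative $P_I\in\mon{F}$ with $\Sup{P_I}=I$, apply the balancing claim to each $P_I$, group the $I$'s by the product gate returned, and assemble for each group a single $m$-balanced product (whose first factor is restricted to monomials of support in $(m/3,2m/3]$) whose support family is exactly that group; the union over the at most $s$ groups then equals $\Sup{f}$.

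\textbf{Main obstacle.} The delicate step in both parts is to ensure that the assembled products $g_v\prd h_v$ satisfy $\Sup{g_v\prd h_v}\subseteq\Sup{f}$: the naive choice $g_v=F_{v_1}$, $h_v=F_{v_2}$ produces spurious ``cross-term'' supports lying outside $\Sup{f}$. I would resolve this by restricting $g_v$ and $h_v$ to only those monomials actively participating in the balanced factorisations demanded by the claim, using the counting identity $\Sup{F}=\Sup{f}$ to argue that every surviving cross-product has a support in $\Sup{f}$---this is precisely where the counting hypothesis gives more leverage than a decision-complexity analysis would. A closely related technical point is enforcing the support equality $\Sup{RS}=\Sup{P}$ within the balancing step itself; this is handled by the careful choice of parse tree described above.
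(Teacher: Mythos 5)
There is a genuine gap, and it sits exactly at your ``balancing claim''. You pair the balanced part $R=P_w$ with the monomial $S=P_{w''}$ produced at the \emph{sibling} input of the product gate $v=w\prd w''$. But $R\cdot S$ is only the monomial produced at $v$ inside the parse tree; the parse tree continues above $v$, and the leaves lying outside the subtree rooted at $v$ contribute further variables to $P$. Hence $\Sup{RS}$ is in general a proper subset of $\Sup{P}$, and no choice of sum-gate branches can repair this: whenever the path from $v$ to the output passes through another product gate, the subtrees hanging off that path necessarily contribute their own variables (already in $F=(x+y)(y+z)(x+z)$ no single child of a product gate can account for all of $x,y,z$ unless that gate is the output). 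The correct second factor is not the sibling polynomial $F_{v_2}$ but the \emph{quotient} polynomial $Q_u$ produced ``after'' the chosen gate $u$: writing $F=P_u\pr Q_u+R$ as in Claim~\ref{clm:contain}, the monomial $p$ splits as $p'p''$ with $p'\in\mon{P_u}$ the part produced inside the subtree of $u$ and $p''\in\mon{Q_u}$ the product of \emph{all} remaining leaves of the parse tree; this is what yields the required support equality.

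Once the quotient is used, your ``main obstacle'' dissolves: since $F=P_u\pr Q_u+R$ is an identity of polynomials with nonnegative coefficients, $\mon{P_u\pr Q_u}\subseteq\mon{F}$ holds syntactically, so $\Sup{P_u\pr Q_u}\subseteq\Sup{F}=\Sup{f}$ with no pruning of ``non-participating'' monomials needed. Your proposed pruning would in any case not suffice: a cross term $rs$ in which $r$ and $s$ participate in balanced factorisations of two \emph{different} long monomials need not have support in $\Sup{f}$, and the identity $\Sup{F}=\Sup{f}$ gives no leverage on such terms once you have replaced the quotient by a sibling. Finally, for part~(i) you try to assemble, per gate, a product whose support family is \emph{exactly} a prescribed group of sets; a join $\Sup{g}\joi\Sup{h}$ cannot realise an arbitrary family, and this is unnecessary. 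The paper proves (i) by a different and simpler iteration: weight each gate by the minimum monomial length of the polynomial it produces, use subadditivity to find one gate $u$ with that length in $(m/3,2m/3]$, write $F=P_u\pr Q_u+R$, record the $m$-balanced product $P_u\pr Q_u$, and recurse on $R$, which is producible by a circuit with one gate fewer and still has all monomial lengths at least $m$ because $\mon{R}\subseteq\mon{F}$.
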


Various versions of claim (i) (with degree of or the total number of variables in polynomials used instead of their length) were observed by several authors including Hyafil~\cite{hyafil}, Jerrum and Snir~\cite{jerrum}, Valiant~\cite{valiant80}, and Raz and Yehudayoff~\cite{RY}. The advantage of claim (ii) is its wider applicability: the polynomial $f$ itself is allowed to have also short monomials, shorter than~$m$.

Our next results are more \emph{explicit} lower bounds for counting circuits.
The \emph{$r$-th degree}, $\ddeg{A}{r}$, of a family of sets $A$ is
the maximum number of sets in $A$
containing a fixed $r$-element set:
\[
\ddeg{A}{r}=\max_{|b|=r} |\{a\in A\colon a\supseteq b\}|\,.
\]
In other words, the intersection of any $\ddeg{A}{r}$ sets in $A$ can have at most $r$ elements.
Note that
\[
|A|=\ddeg{A}{0}\geq \ddeg{A}{1}\geq \ldots\geq \ddeg{A}{r}=1> 0=\ddeg{A}{r+1}\,.
\]
where  $r=\max\{|a|\colon a\in A\}$. Also, $A\subseteq B$ implies $\ddeg{A}{r}\leq \ddeg{B}{r}$. If $A$ is a graph (viewed as a set of edges), then $\ddeg{A}{1}$
is the maximum degree of $A$.
In general, $\ddeg{A}{r}$ is related with $|A|$ as follows: if
 $A$ is a family of $m$-element subsets on $[n]$, then for every $r\leq m$ we have that
\[
|A|\binom{m}{r}\leq \ddeg{A}{r}\cdot \binom{n}{r}\,.
\]
This can be shown by counting  in two ways the number $M$ of pairs $(a,b)$,
where $a\in A$,
$|b|=r$ and $a\supseteq b$ holds. By first fixing sets $a\in A$, we get that $M$ is
equal to the left-hand side. By fixing sets $b$, and taking all possible $m$-element sets $a$ containing $b$, we get that $M$ is at most the right-hand size.

As we mentioned in the introduction, lower bounds for deciding, and hence, also for counting  $\{+,\prd\}$-circuits, can be obtained using lower-bounds arguments
for monotone boolean circuits (see, e.g. \cite[Chapt. 9]{myBFC-book} and the literature cited herein), but these are not easy to apply.
The reason here lies in a ``dual character'' of these arguments: in order to obtain a large lower bound
of the decision complexity of a polynomial $f$ given by \eqref{eq:pol},
not only the set of monomials of $f$ itself but also that of its ``dual'' $f^*$
must have some good structural properties. The \emph{dual} $f^*$ of a polynomial
\[
f=\sum_{u\subseteq [n]}c_u\prod_{i\in u}x_i\ \ \mbox{ is }\ \ f^*=\prod_{u:c_u>0}
\sum_{i\in u}x_i\,.
\]
Note that, for every $0$-$1$ input $a=(a_1,\ldots,a_n)$, $f(a)= 0$ if and only if $f^*(\bar{a})\neq 0$,
where $\bar{a}=(1-a_1,\ldots,1-a_n)$. This holds, because every set in $\Sup{f^*}$
intersects every set in $\Sup{f}$.
More precisely, a general lower bound for deciding $\{+,\prd\}$-circuits is
the following.

\begin{thm}[\cite{Juk99}]
Let $f(x_1,\ldots,x_n)$ be a polynomial, and $2\leq r,s\leq n$ be integers. Then for every $A\subseteq \Sup{f}$ and $B\subseteq \Sup{f^*}$ such that $\ddeg{A}{1}\leq |A|/2(s-1)$,
we have
\[
\dec(f)\geq \min\left\{\frac{|A|}{2(s-1)^r\cdot \ddeg{A}{r}},\
\frac{|B|}{(r-1)^s\cdot \ddeg{B}{s}} \right\}\,.
\]
\end{thm}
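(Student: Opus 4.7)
The plan is to adapt the monotone-circuit approximation method to this setting, exploiting the fact that a deciding $\{+,\prd\}$-circuit is literally a monotone Boolean circuit (replace $+$ by $\lor$ and $\prd$ by $\land$). Such a circuit accepts $\mathbf{1}_\alpha$ exactly when $\alpha$ contains some set of $\Min{f}$. As positive test inputs I take $\mathcal{A}_+=\{\mathbf{1}_\alpha:\alpha\in A\}$: each $\alpha\in A\subseteq \Sup{f}$ contains a minterm and must be accepted. As negative test inputs I take $\mathcal{B}_-=\{\mathbf{1}_{[n]\setminus\gamma}:\gamma\in B\}$: each $\gamma\in B\subseteq\Sup{f^*}$ meets every set in $\Sup{f}$, so $\mathbf{1}_{[n]\setminus\gamma}$ is a zero of $f$ and must be rejected. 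Thus the circuit separates $|A|$ accepts from $|B|$ rejects.

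Let $\mathcal{L}$ be the class of monotone DNFs with at most $s-1$ terms, each a conjunction of at most $r-1$ variables. Define the approximate OR $\tilde\lor$ on $\mathcal{L}$ as ordinary $\lor$ followed by discarding excess terms once there are more than $s-1$ of them. Define the approximate AND $\tilde\land$ as ordinary $\land$ distributed into a DNF, followed by iterated \emph{$r$-plucking}: whenever a cluster of terms all share a common $(r-1)$-subset, replace them with that subset, shortening their length. Replacing every gate of a size-$t$ decider by its approximate counterpart yields an approximating function $\tilde\Phi\in\mathcal{L}$.

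The key step, and the main technical obstacle, is the per-gate error accounting. An OR-discard removes a single term, and by the $r$-th degree bound this spoils at most $\ddeg{A}{r}$ inputs of $\mathcal{A}_+$; amortised over the at most $(s-1)^r$ layers of recursion this gives a per-gate loss of $(s-1)^r\cdot\ddeg{A}{r}$. Dually, an AND-plucking step newly accepts an input $\mathbf{1}_{[n]\setminus\gamma}$ only when $\gamma\in B$ avoids a specific $s$-subset, so spoils at most $\ddeg{B}{s}$ inputs of $\mathcal{B}_-$ per step; amortised, this gives $(r-1)^s\cdot\ddeg{B}{s}$ per gate. The hypothesis $\ddeg{A}{1}\leq|A|/2(s-1)$ is used to handle the final, simplest approximators: any $\tilde\Phi\in\mathcal{L}$ accepts at most $(s-1)\cdot\ddeg{A}{1}\leq|A|/2$ positive inputs, so the approximation errors must cover at least $|A|/2$ of $\mathcal{A}_+$, and dually all of $\mathcal{B}_-$ on the other side.

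Combining these two accountings, at least one of
\[
t\cdot 2(s-1)^r\ddeg{A}{r}\geq |A|\quad\text{or}\quad t\cdot(r-1)^s\ddeg{B}{s}\geq |B|
\]
must hold, which rearranges to the claimed lower bound on $\dec(f)=t$. The delicate point lies in choosing the precise plucking thresholds so that the per-gate losses come out exactly as $(s-1)^r\ddeg{A}{r}$ and $(r-1)^s\ddeg{B}{s}$, and in verifying that the recursion does terminate with an element of $\mathcal{L}$; the rest is routine bookkeeping in the approximation method.
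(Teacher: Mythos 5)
First, note that the paper does not prove this theorem at all: it is imported verbatim from \cite{Juk99} (see also \cite[Chapt.~9]{myBFC-book}), so there is no in-paper proof to compare against. Your proposal correctly identifies the intended method --- Razborov-style approximation applied to the monotone boolean circuit obtained from the decider, with positive test inputs $\mathbf{1}_a$ for $a\in A\subseteq\Sup{f}$ and negative test inputs $\mathbf{1}_{[n]\setminus b}$ for $b\in B\subseteq\Sup{f^*}$ (these are indeed zeros of $f$, since each $b$ is a transversal of $\Sup{f}$), with DNF approximators of bounded term length and term count, lengths trimmed at AND gates and counts trimmed by plucking. That architecture is right.

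However, the error accounting --- which you yourself flag as ``the main technical obstacle'' and then defer --- is precisely where the entire content of the theorem lives, and the two concrete charging steps you do state are both wrong in direction. First, you charge an OR-side term discard to $\ddeg{A}{r}$; but a discarded term in your class $\cal{L}$ has length at most $r-1$, and the number of $a\in A$ containing a fixed set of size $\ell<r$ is bounded only by $\ddeg{A}{\ell}\geq\ddeg{A}{r}$, with no control. In the actual argument $\ddeg{A}{r}$ enters through the deletion of \emph{long} terms (length $\geq r$) created at AND gates, each of which contains an $r$-element set and hence is satisfied by at most $\ddeg{A}{r}$ positive inputs; OR gates must introduce \emph{no} positive errors, which is why plucking (which only enlarges the accepted set) rather than discarding is used there. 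Second, on the negative side you say a plucking spoils $\mathbf{1}_{[n]\setminus\gamma}$ only when $\gamma$ \emph{avoids} a specific $s$-subset; but $\ddeg{B}{s}$ bounds the number of sets of $B$ \emph{containing} a fixed $s$-set, so avoidance gives no bound. The correct statement is that a newly accepted $\gamma$ must hit every removed petal, hence must \emph{contain} an $s$-element transversal of $s$ of the petals; one then needs a sunflower/disjointness argument to guarantee that many pairwise disjoint petals exist before plucking, and the count $(r-1)^s$ is exactly the number of such transversals (each of $s$ petals having at most $r-1$ elements). Finally, your class $\cal{L}$ (at most $s-1$ terms) is inconsistent with the factor $(s-1)^r$ in the bound: with that class an AND gate produces at most $(s-1)^2$ long terms, and the exponent $r$ only emerges when the approximators are organized as depth-$(r-1)$, fanout-$(s-1)$ trees of terms. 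As it stands the proposal is a correct statement of the strategy of \cite{Juk99} but not a proof; the asserted per-gate losses do not follow from the mechanisms described.
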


As shown in \cite{Juk99} (see also \cite[Chapt. 9]{myBFC-book}), this criterion
allows to obtain strong (super-polynomial) lower bounds on $\dec(f)$, and hence, also on $\C(f)$ and $\CC(f)$,  for some
explicit polynomials. The strength of this criterion lies in the possibility to arbitrarily chose both the
parameters $r,s$ as well as sub-families $A$ and $B$.
The weakness, however, lies in its
 ``dual nature'' making it not easy to apply: \emph{both} $|A|/\ddeg{A}{r}$
\emph{and} $|B|/\ddeg{B}{s}$ must be large. It is usually easy to ensure that
$|A|/\ddeg{A}{r}$ is large. The problem, however, is with the dual set $B$, because
the set of monomials of the dual polynomial $f^*$ may be rather ``messy'', even though
the polynomial $f$ itself has a ``nice'' structure. Say, if $f=\PERM{n}$, then
$|A|/\ddeg{A}{r}=n!/(n-r)!$ is large enough already for $A=\Sup{f}$.
But monomials of $f^*$ correspond then to complements of graphs without perfect matchings,
and it is difficult to ensure that  $|B|/\ddeg{B}{s}$ is also large for some family $B$
of such graphs.

For counting $\{+,\prd\}$-circuits, we have a much more handy lower-bounds
criterion, avoiding the need of dual polynomials. By the $r$-th degree, $\ddeg{f}{r}$, of a polynomial $f$ we will mean
the $r$-th degree $\ddeg{A}{r}$ of its support $A=\Sup{f}$.
Thus, if $f$ is multilinear, then $\ddeg{f}{r}$ is the maximum number of monomials of $f$
containing a common factor of degree~$r$.

\begin{thm}\label{thm:main}
Let $f=g+h$ be a polynomial such that every monomial of $g$ has at least $m\geq 2$ variables, and every monomial of $h$ has fewer than $m/3$ variables.  Then there is an integer $r$ between $m/3$
 and $2m/3$ such that
\begin{equation}\label{eq:free1}
    \C(f)\geq \frac{|\Sup{g}|}{\ddeg{g}{r}\cdot\ddeg{g}{m-r}}\,.
\end{equation}
\end{thm}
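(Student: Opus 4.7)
The plan is to apply Lemma~\ref{lem:balan}(ii) and carry out a counting argument in the support lattice, in the spirit of Jerrum--Snir but adapted to counting circuits. First I would observe that, because every monomial of $h$ has length strictly less than $m/3<m$, the monomials of $f$ of length at least $m$ are precisely those of $g$. Hence, setting $s:=\C(f)$, Lemma~\ref{lem:balan}(ii) supplies $s$ polynomial products $p_iq_i$ with $\Sup{p_iq_i}\subseteq\Sup{f}$ such that every $M\in\Sup{g}$ admits, in some $p_iq_i$, a balanced witness $r_M\in\mon{p_i}$, $s_M\in\mon{q_i}$ with $r_M\cup s_M=M$ (as supports) and $l_M:=|r_M|\in(m/3,2m/3]$. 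I would fix such an assignment $M\mapsto(i(M),r_M,s_M)$ once and for all.

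Next, for each pair $(i,l)$ with $l\in(m/3,2m/3]$, let $N_{i,l}$ be the number of monomials $M\in\Sup{g}$ assigned to index $i$ with first-part length $l_M=l$. The core estimate I would prove is
\[
N_{i,l}\;\leq\;\ddeg{g}{l}\cdot\ddeg{g}{m-l}.
\]
The first factor is immediate: for any fixed $r$ of length $l$, at most $\ddeg{g}{l}$ monomials of $g$ contain $r$, so the map $M\mapsto r_M$ has fibres of this size. For the second factor I would bound the number of distinct length-$l$ first parts $r\in\Sup{p_i}$ that can actually appear as some $r_M$. Coupling any such $r$ with a fixed partner $s_0\in\Sup{q_i}$ gives $r\cup s_0\in\Sup{p_iq_i}\subseteq\Sup{f}$, and since $|r\cup s_0|\geq l>m/3$ while every set in $\Sup{h}$ has length below $m/3$, we in fact have $r\cup s_0\in\Sup{g}$. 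Then the number of preimages $r$ of the map $r\mapsto r\cup s_0$ that lie in $\Sup{g}$ is bounded by $\ddeg{g}{m-l}$, which is the counting analogue of the classical multilinear move.

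Finally I would sum over $(i,l)$ to obtain $|\Sup{g}|=\sum_{i,l}N_{i,l}\leq s\cdot\max_{r\in(m/3,2m/3]}\ddeg{g}{r}\ddeg{g}{m-r}$, where the outer factor $s$ comes from charging each index $i$ only once: if necessary one first passes to homogeneous support-layers of $p_i,q_i$ so that each product is effectively associated with a single length. Choosing $r$ to be this maximiser yields $\C(f)\geq |\Sup{g}|/(\ddeg{g}{r}\ddeg{g}{m-r})$, as desired. The main obstacle will be the second factor $\ddeg{g}{m-l}$ in the core estimate: in the multilinear computing setting one has $r\cap s_0=\emptyset$, making $r\mapsto r\cup s_0$ injective, whereas here counting circuits may produce non-multilinear polynomials and the supports $r$ and $s_0$ can overlap. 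Overcoming this will require a careful choice of $s_0$ (for instance of minimum length in $\Sup{q_i}$), together with exploiting the length-gap between $\Sup{g}$ and $\Sup{h}$, which forces the image to lie in $\Sup{g}$ and allows the overlaps to be absorbed without degrading the bound.
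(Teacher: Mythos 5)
Your overall skeleton matches the paper's: invoke Lemma~\ref{lem:balan}(ii), use the length gap between $g$ and $h$ to force the relevant products into $\Sup{g}$, and bound each product's contribution by $\ddeg{g}{r}\cdot\ddeg{g}{m-r}$. But the one step that carries all the difficulty --- your ``core estimate'' --- is not proved, and you flag this yourself: the map $r\mapsto r\cup s_0$ is not injective once supports may overlap, and the remedies you sketch (choose $s_0$ of minimum length, ``absorb the overlaps'') do not work. Nothing prevents $\Sup{p_i}$ from containing very many sets of the form $r_0\cup t$ with $t\subseteq s_0$, all of which have the same union with $s_0$; minimality of $|s_0|$ does not exclude this. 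So the number of distinct first parts $r$ cannot be bounded by $\ddeg{g}{m-l}$ along this route, and the factorization of $N_{i,l}$ into (number of distinct $r$) times (fibre size of $M\mapsto r_M$) collapses at its second factor.

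The paper's resolution (Lemma~\ref{lem:sharp}) counts the sets of $\Sup{p_i\pr q_i}$ directly rather than the first parts. Fix a \emph{single} $b\in\Sup{p_i}$ with $|b|=r$ and $m/3<r\leq 2m/3$ (one balanced witness suffices), and partition $\Sup{q_i}$ into the classes $C_a=\{c\colon b\cup c=a\}$ indexed by $a\in\{b\}\joi\Sup{q_i}$. There are at most $\ddeg{g}{r}$ classes, since every index $a$ lies in $\Sup{g}$ and contains $b$. Within one class, every union $b'\cup c'$ with $b'\in\Sup{p_i}$ and $c'\in C_a$ contains the fixed set $a\setminus b$, which has size at least $m-r$ because $|a|\geq m$; hence each class contributes at most $\ddeg{g}{m-r}$ sets. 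This bounds the entire $|\Sup{p_i\pr q_i}|$ by $\ddeg{g}{r}\cdot\ddeg{g}{m-r}$ in one stroke, and it also removes the need for your per-length bookkeeping $N_{i,l}$ and the unjustified ``pass to homogeneous support-layers'' step (which, done naively, would cost an extra factor of about $m/3$). With your core estimate replaced by this lemma, the rest of your argument goes through.
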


There is yet another general lower-bounds criterion for monotone arithmetic circuits, due
to Gash\-kov~\cite{gashkov}, and Gashkov and Sergeev~\cite{GS}. They call a polynomial $f$ $(k,l)$-\emph{sparse}, if
\[
\mbox{$\mon{g\pr h}\subseteq \mon{f}$ implies  $|\mon{g}|\leq k$
or $|\mon{h}|\leq l$.}
\]
They proved that  $\CC(f)+1\geq |\mon{f}|/\max\{k^3,l^2\}$ holds for every such polynomial.
Note that the bound is not trivial, because the fact that $|\mon{g}|\leq k$
or $|\mon{h}|\leq l$ holds does not imply that $|\mon{g\pr h}|\leq kl$ must also hold (because
we have an ``or'', not ``and'' here).
To obtain a similar lower bound for counting circuits, we will modify their notion of ``sparsity''.

Let, as before, $\Min{f}\subseteq \Sup{f}$ denote the family of all members of $\Sup{f}$
which are minimal in the sense than they do not contain any other members of $\Sup{f}$. Call a polynomial $f$ $(k,l)$-\emph{free}
if, for every two polynomials $g$ and $h$,
\[
\mbox{$\Sup{g\pr h}\subseteq \Sup{f}$ implies $\m{g}\leq k$ or $\m{h}\leq l$.}
\]
The reason to only require $|\Min{g}|\leq k$ instead of $|\Sup{g}|\leq k$
 is that then it is (potentially) easier to show that a given polynomial is $(k,l)$-free: $|\Min{g}|$ can be much smaller than $|\Sup{g}|$.

\begin{thm}\label{thm:main2}
Let $1\leq k\leq l$ be integers.
For every $(k,l)$-free polynomial $f$, its support $\Sup{f}$ is a union of
at most $2\C(f)$ supports $\Sup{g\pr h}$ of products $g\pr h$ of polynomials such that
$|\Min{g\pr h}|\leq kl^2$. In particular,
\[
\C(f)\geq \frac{|\Min{f}|}{2kl^2}\,.
\]
\end{thm}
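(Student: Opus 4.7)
The plan is to build the cover by walking downward through a minimum $\{+,\prd\}$-circuit of size $s=\C(f)$ counting $f$, extracting products by repeated applications of $(k,l)$-freeness.

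Fix such a circuit and let $F$ be the polynomial it produces; by Lemma~\ref{lem:monic}, $\Sup{F}=\Sup{f}$. For each gate $v$ let $F_v$ denote the polynomial produced at $v$. I would process \emph{tasks} of the form $(G,v)$, where $G$ is an auxiliary ``context'' polynomial and $v$ a gate of the circuit, representing the product $G\pr F_v$; throughout two invariants are maintained, namely $\Sup{G\pr F_v}\subseteq \Sup{f}$ and $\m{G}\leq kl$. The initial task is $(1,o)$ at the output gate $o$, for which both invariants hold trivially.

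At a sum gate $v=u_1+u_2$ the task $(G,v)$ splits cleanly into $(G,u_1)$ and $(G,u_2)$, preserving both invariants and the union of supports covered. At a product gate $v=u\pr w$ I apply $(k,l)$-freeness to the three products $(G F_u)\pr F_w$, $(G F_w)\pr F_u$ and $G\pr F_v$ --- each has support contained in $\Sup{f}$ by the invariant --- obtaining three disjunctions. If either $\m{G F_u}\leq k$ or $\m{G F_w}\leq k$, replace the task by $(G F_u, w)$ or $(G F_w, u)$ respectively; the new context $G'$ satisfies $\m{G'}\leq k\leq kl$, so the invariants are preserved and the walk has progressed strictly down the circuit. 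Otherwise the opposite halves of the first two disjunctions force $\m{F_u},\m{F_w}\leq l$, hence $\m{F_v}\leq l^2$; the third disjunction then yields $\m{G\pr F_v}\leq kl^2$ in either case (either $\m{G}\leq k$ giving $k\cdot l^2$, or $\m{F_v}\leq l$ giving $\m{G}\cdot l\leq kl\cdot l=kl^2$), and we terminate the task by adding $(G,F_v)$ to the cover.

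Sum-gate splittings preserve the union of supports and product-gate descents do not change it, so the union of the output products' supports equals $\Sup{f}$. The ``in particular'' inequality then follows from $\Min{f}\subseteq\bigcup_i\Min{g_ih_i}$, since each $\Sup{g_ih_i}\subseteq\Sup{f}$ forces the minimal elements of $\Sup{f}$ to remain minimal in whichever covering piece contains them. The step I expect to be the main obstacle is bounding the number of terminating tasks by $2\C(f)$: in a DAG a gate may be reached through several different paths of the recursion tree with different contexts $G$, so the naive count only yields a bound in terms of formula size. I would handle this by an amortized charging in which each product gate is charged at most twice (once per ``side'' selected by the descent rules at that gate) and each sum gate at most once per visit, combined with the observation that after a descent through a product gate the new context has $\m{G'}\leq k$, which limits the ways a gate can be re-entered; making this accounting tight enough to land at exactly $2s$ is the delicate technical point of the proof.
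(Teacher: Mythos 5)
Your local reasoning is sound: the two invariants are preserved, the three applications of $(k,l)$-freeness are legitimate (each of $(GF_u)\pr F_w$, $(GF_w)\pr F_u$ and $G\pr F_v$ does have support inside $\Sup{f}$ by the invariant), the terminating case correctly yields $\m{G\pr F_v}\leq kl^2$ via submultiplicativity of $\m{\cdot}$ under products, and the passage from the cover to the bound on $\Min{f}$ is fine. The gap is exactly the one you flag yourself, and the repair you sketch cannot close it. In a DAG your recursion can reach a gate $v$ along exponentially many distinct root-to-$v$ paths, each carrying its own context $G$, and the products $G\pr F_v$ contributed to the cover by distinct contexts are distinct members of the cover; what must be bounded by $2\C(f)$ is the number of pairs (context, gate), not the number of gates or edges, so no charging of gates or edges can work. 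The observation that $\m{G'}\leq k$ after a product-gate descent constrains the \emph{size} of a context, not the \emph{number} of distinct contexts arriving at a gate, so it does not limit re-entries at all; as you note, the honest bound your argument gives is in terms of formula size.

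The paper avoids the problem by never carrying a path-dependent context: to each \emph{edge} $e=(u,v)$ it attaches one canonical context $Q_e$, the polynomial produced ``after'' $e$, so that $F=P_u\pr Q_e+R$ holds globally (Claim~\ref{clm:contain}), and Claim~\ref{clm:cuts} shows that for any cut $E$ of the circuit, $\mon{F}$ is covered by the products $P_u\pr Q_e$ with $e\in E$. The work then consists in exhibiting a cut of \emph{light} edges, those with $\m{P_u\pr Q_e}\leq kl^2$: after preprocessing the circuit so that every product gate with an input $u$ satisfying $\m{P_u}\leq l$ becomes a unary scalar multiplication (one incoming edge deleted), one takes on each input-output path the last edge $(u,v)$ with $\m{P_u}\leq k$ and verifies it is light, using $(k,l)$-freeness once (to get $\m{Q_v}\leq l$ from $\m{P_v}>k$) together with the preprocessing (to get $\m{P_w}\leq l$ for the sibling input $w$ of a product gate $v$). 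Since there is exactly one product per edge and at most $2\C(f)$ edges, the count is immediate. To rescue your top-down recursion you would have to make the context at each gate canonical rather than path-dependent, which essentially amounts to reconstructing $Q_e$.
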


\begin{rem}
The proofs of Theorems~\ref{thm:main} and \ref{thm:main2} extend to $\C(f)$
the arguments used in \cite{gashkov,GS,juk14} to lower-bound $\CC(f)$.
The main difficulty with the extension (stipulated by the idempotence axiom $x^2=x$) is that, unlike the measure $\mu(f)=|\mon{f}|$ (used to lower-bound $\CC(f)$), the measures $|\Sup{f}|$ and $|\Min{f}|$
are no more ``monotone '' in the sense that $\mu(f)\leq \mu(f\pr g)$. To see this, take, for example,
$f=x_1+x_2+\cdots+x_n$ and $g=x_1x_2\cdots x_n$. Then $|\Sup{f}|=n$ but $|\Sup{f\pr g}|=1$.
\end{rem}

\begin{rem}
The proofs of Theorems~\ref{thm:main} and \ref{thm:main2} are based on the fact
(Lemma~\ref{lem:monic}) that, if a $\{+,\prd\}$-circuit counts a polynomial $f$,
then the produced by the circuit polynomial $F$ must satisfy $\Sup{F}=\Sup{f}$.
Thus, these bounds do not extend to monotone \emph{boolean} circuits,
where we only have a much weaker property $\Min{F}=\Min{f}$.
\end{rem}

\section{Some Applications}
\label{sec:appl}

Theorem~\ref{thm:main} allows us to easily obtain strong lower bounds on
$\C(f)$ for many polynomials.
Let us demonstrate this on some of
them. First, associate with very set $H$ of permutations $h:[n]\to[n]$ the
polynomial in $n^2$ variables $x_{i,j}$:
\[
f_H(x) = \sum_{h\in H}\prod_{i=1}^n x_{i,h(i)}\,.
\]
For example, if $H$ consists of all permutations, then $f_H$
is the permanent polynomial $\PERM{n}$. If $H$ consists of al \emph{cyclic} permutations,
then the monomial of $f_H$ correspond to Hamiltonian cycles in $K_n$.

\begin{cor}\label{cor:bounds1}
For every set $H$ of permutations of $[n]$, there is an $r$ such that
$n/3<r\leq 2n/3$ and
\[
\C(f)\geq \frac{|H|\binom{n}{r}}{n!}\,.
\]
\end{cor}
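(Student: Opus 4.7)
The plan is to apply Theorem~\ref{thm:main} directly to $f_H$. Every monomial $\prod_{i=1}^n x_{i,h(i)}$ of $f_H$ has exactly $n$ variables, so I would take $g = f_H$, take the ``short part'' to be $0$, and set $m = n$. Theorem~\ref{thm:main} then yields an integer $r$ with $n/3 < r \leq 2n/3$ and
\[
\C(f_H) \geq \frac{|\Sup{f_H}|}{\ddeg{f_H}{r}\cdot\ddeg{f_H}{n-r}}.
\]

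All that remains is to estimate these three quantities in terms of $|H|$ and $n$. Distinct permutations $h \in H$ yield monomials whose supports $\{x_{i,h(i)} : i \in [n]\}$ are themselves distinct (two permutations that agree on all coordinates are equal), so $|\Sup{f_H}| = |H|$. For the $r$-th degree, fix any set $b$ of $r$ variables $x_{i_1,j_1},\dots,x_{i_r,j_r}$: a monomial of $f_H$ has its support containing $b$ if and only if its permutation $h$ satisfies $h(i_k) = j_k$ for all $k$. This is possible only when $b$ is a partial matching (the $i_k$ distinct and the $j_k$ distinct), and then the number of extensions to a permutation of $[n]$ is exactly $(n-r)!$. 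Hence $\ddeg{f_H}{r} \leq (n-r)!$, and a symmetric argument with $n-r$ in place of $r$ gives $\ddeg{f_H}{n-r} \leq r!$.

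Substituting these bounds into the inequality from Theorem~\ref{thm:main} produces
\[
\C(f_H) \geq \frac{|H|}{r!\,(n-r)!} = \frac{|H|\binom{n}{r}}{n!},
\]
which is the claimed bound. There is really no substantial obstacle here: the corollary is a clean specialization of Theorem~\ref{thm:main} once one observes that the support of $f_H$ consists of $n$-element sets indexing partial permutations, so that the $r$-th degree is controlled by the standard count $(n-r)!$ of extensions of a partial matching on $[n]$.
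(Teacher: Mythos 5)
Your proposal is correct and follows essentially the same route as the paper: apply Theorem~\ref{thm:main} with $h=0$ and $m=n$, and bound $\ddeg{f_H}{r}\leq (n-r)!$ by noting that fixing $r$ variables fixes $r$ values of the permutation. Your write-up is just a slightly more explicit version of the paper's argument (in particular, spelling out both $\ddeg{f_H}{r}\leq(n-r)!$ and $\ddeg{f_H}{n-r}\leq r!$ separately).
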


In particular, $\C(\PERM{n})\geq \binom{n}{r}=2^{\Omega(n)}$.

\begin{proof}
  The polynomial $f_H$ has $|H|$ monomials, each
   specified by a permutation $h\in H$ of $[n]$. If some $r$ variables are fixed, this fixes $r$ values of $h$.
   Hence, at most $(n-r)!$ of the permutations can
  take $r$ pre-described values, implying that  $\ddeg{f}{r}\leq (n-r)!$.  Thus, Theorem~\ref{thm:main} gives that $\C(f)$ is at least $|H|$ divided
   by the maximum of $r!(n-r)!$ over all $n/3<r\leq 2n/3$.
\end{proof}

In some cases, Theorem~\ref{thm:main} allows to even obtain almost optimal bounds.
 A \emph{partial $t$--$(n,m,\lambda)$ design} is a family $A$ of
  $m$-element subsets of $\{1,\ldots,n\}$ such that any $t$-element
  set is contained in at most $\lambda$ of its members. We can
  associate with each such design $A$ a multilinear polynomial
  \[
  f_A(x)=\sum_{a\in A}\prod_{i\in a}x_i\,.
  \]
\begin{cor}\label{cor:design}
For every partial $t$--$(n,m,\lambda)$ design $A$ with $m/3\leq t\leq 2m/3$, we have
$\C(f_A)\geq |A|/\lambda^2$.
\end{cor}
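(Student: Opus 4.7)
The plan is to apply Theorem~\ref{thm:main} to $f_A$ directly, taking $g=f_A$ and $h=0$. Every monomial of $f_A$ has length exactly $m$ (each $a\in A$ is an $m$-element set), so the hypothesis ``every monomial of $g$ has at least $m\geq 2$ variables'' is satisfied, and the condition on $h$ is vacuous. The theorem then supplies an integer $r$ with $m/3\leq r\leq 2m/3$ for which
\[
\C(f_A)\;\geq\;\frac{|\Sup{f_A}|}{\ddeg{f_A}{r}\cdot\ddeg{f_A}{m-r}}\;=\;\frac{|A|}{\ddeg{A}{r}\cdot\ddeg{A}{m-r}},
\]
the equality using multilinearity of $f_A$ to identify $\Sup{f_A}$ with $A$ and $\ddeg{f_A}{k}$ with $\ddeg{A}{k}$ for every $k$.

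What remains is to bound each of $\ddeg{A}{r}$ and $\ddeg{A}{m-r}$ by $\lambda$. Here I would exploit a monotonicity of the partial-design condition: for any $t'\geq t$, every $t'$-element set $b$ contains some $t$-subset $b_0$, and each $a\in A$ with $a\supseteq b$ also satisfies $a\supseteq b_0$; since at most $\lambda$ members of $A$ contain $b_0$, this gives $\ddeg{A}{t'}\leq\lambda$. I would then invoke this observation with $t'=r$ and $t'=m-r$, both of which lie in $[m/3,2m/3]$, using the range hypothesis $m/3\leq t\leq 2m/3$ to place $t$ below both $r$ and $m-r$. Substituting $\ddeg{A}{r},\ddeg{A}{m-r}\leq\lambda$ into the displayed inequality yields the claimed bound $\C(f_A)\geq|A|/\lambda^2$.

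The main obstacle is really just the bookkeeping step of matching the integer $r$ returned by Theorem~\ref{thm:main} with the admissible range for $t$: one has to check that for $t$ in the hypothesized interval, both $r\geq t$ and $m-r\geq t$ necessarily hold, so that the monotonicity observation applies to both factors in the denominator. All of the substantive work is already encapsulated in Theorem~\ref{thm:main}; once this range check is made, the corollary follows as an immediate two-step application of the theorem and the elementary upward monotonicity of the design property, with no further structural information about $A$ beyond being a partial $t$-design.
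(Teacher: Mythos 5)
Your strategy is exactly the paper's: apply Theorem~\ref{thm:main} with $g=f_A$ and $h=0$, identify $\Sup{f_A}$ with $A$, and bound both $\ddeg{A}{r}$ and $\ddeg{A}{m-r}$ by $\lambda$ via upward monotonicity of the partial-design property. The gap sits in the very step you single out as ``the main obstacle'': the range hypothesis $m/3\leq t\leq 2m/3$ does \emph{not} place $t$ below both $r$ and $m-r$. Theorem~\ref{thm:main} only guarantees $m/3<r\leq 2m/3$, so all you know is $\min\{r,m-r\}\geq m/3$; if, say, $t=2m/3$ and the theorem hands you an $r$ near $m/2$, then both $r<t$ and $m-r<t$, and upward monotonicity tells you nothing about $\ddeg{A}{r}$ or $\ddeg{A}{m-r}$. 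Downward monotonicity genuinely fails for partial designs: two $m$-element blocks sharing exactly $t-1$ points form a partial $t$-design with $\lambda=1$ whose $(t-1)$-degree is $2$. So the check you defer does not go through as claimed; what the argument actually needs is $t\leq m/3$, which forces $t\leq\min\{r,m-r\}$ for every admissible $r$.

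In fairness, the paper's own one-line proof makes the same elision: it observes that $r$ and $m-r$ are both at least $m/3$ and then invokes ``the design property,'' which is only legitimate when $t\leq m/3$. So your reconstruction faithfully mirrors the intended argument, and under the hypothesis $t\leq m/3$ (or for $t=m/3$ exactly, the only value consistent with both readings) everything you wrote is correct, including the reduction to Theorem~\ref{thm:main} and the monotonicity lemma $\ddeg{A}{t'}\leq\lambda$ for $t'\geq t$. But as a proof of the corollary under the stated hypothesis $m/3\leq t\leq 2m/3$, the key inequality $t\leq\min\{r,m-r\}$ is asserted rather than established, and it is false in general.
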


\begin{proof}
For all $m/3\leq r\leq 2m/3$, we have that both $r$ and $m-r$ are at least $m/3$.
Thus, the design property implies that both $\ddeg{A}{r}$ and $\ddeg{A}{m-r}$
are at most $\lambda$, and the desired lower bound follows directly from
Theorem~\ref{thm:main}.
\end{proof}

There are many explicit partial designs $A$ with $\lambda << \sqrt{|A|}$. For every of them, the counting complexity of the polynomial $f_A$ is almost the same as the number
of monomials. To give an example, let $n$ be a prime power, and let $A$ consist of all
subsets $a=\{(i,h(i)\colon i\in \mathrm{GF}(n)\}$ of the grid $\mathrm{GF}(n)\times \mathrm{GF}(n)$ corresponding to
polynomials  $h(z)$ of degree at most $d-1$ over $\mathrm{GF}(n)$.
Since no two distinct polynomials of degree $<d$ can coincide on $d$ points, we have that
no two monomials of $f$ can share $d$ variables in common, $A$ is a partial
$1$-$(n^2,n,1)$ design, and we obtain $n^d=|A|\leq \C(f_A)\leq n^{d+1}$.

Theorem~\ref{thm:main2} is more difficult to apply than Theorem~\ref{thm:main}, but it may help for polynomials, on which the latter theorem fails. To demonstrate this,
let $A$ be a set of edges of a bipartite point-line incidence graph of a projective
plane $PG(2,q)$, introduced by Singer~\cite{singer}. The nodes on the left-side correspond to $n=q^2+q+1$ points $x$,
and those on the left-side to $n$ lines $L$, and $x$ and $L$ are adjacent if $x\in L$.
Since every line $L$ has $|L|=q+1$ points, and every point lies in $q+1$ lines, this is a $d$-regular graph of degree
$d=q+1>\sqrt{n}$. Moreover, the graph is $K_{2,2}$-free (i.e. contains no complete $2\times 2$ subgraphs), because every two point lie in only one line, and every two lines share only one point. For the polynomial
\[
f_A(x)=\sum_{uv\in A}x_ux_v\,,
\]
Theorem~\ref{thm:main}
can only give a trivial lower bound $\C(f_A)\geq |A|/d^2=\Omega(\sqrt{n})$. Indeed,
in this case we have $m=2$, and hence, $r=1$. But then both $\ddeg{f_A}{r}$ and
$\ddeg{f_A}{m-r}$ are equal $d>\sqrt{n}$.
On the other hand, it is not difficult to verify that the $K_{2,2}$-freeness of $A$ implies that the polynomial $f_A$ is $(k,l)$-free
for $k=l=1$. Thus, Theorem~\ref{thm:main2} yields an almost optimal
lower bound
\[
\C(f_A)=\Theta(n^{3/2})\,.
\]
As a second example, let us consider the structurally much simpler
 \emph{triangle polynomial} of $n=3m^2$ variables with $m^{3}=\Theta(n^{3/2})$ monomials:
\[
\tR{n}(x,y,z)=\sum_{i,j,k\in[m]} x_{ik}y_{kj}z_{ij}\,.
\]
Schnorr~\cite{schnorr} has shown that $\CC(\tR{n})=\Theta(n^{3/2})$;
this also follows from the lower bound of Gashkov and Sergeev~\cite{GS}
mentioned above, because the polynomial is $(1,1)$-sparse: any triangle is uniquely determined by any two of its edges.

Concerning \emph{counting} circuit complexity of $f=\tR{n}$,
Theorem~\ref{thm:main} can only yield a trivial lower bound $\C(f)\geq m^3/m=m^2=n/3$,
 because up to $m$ triangles
can share a common edge. Still, Theorem~\ref{thm:main2} (with some more effort) allows us to obtain an almost optimal
lower bound.

\begin{cor}\label{cor:triangle}
  If $f=\tR{n}$, then $\C(f)=\Theta(n^{3/2})$.
\end{cor}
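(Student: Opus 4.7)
The matching upper bound is immediate, since $\C(\tR{n}) \leq \CC(\tR{n}) = \Theta(n^{3/2})$ by Schnorr's bound quoted above. For the lower bound, the plan is to apply Theorem~\ref{thm:main2}: it suffices to exhibit absolute constants $k \leq l$ such that $\tR{n}$ is $(k,l)$-free, because this yields $\C(\tR{n}) \geq |\Min{\tR{n}}|/(2kl^2) = m^3/(2kl^2) = \Omega(m^3) = \Omega(n^{3/2})$, using that $\Sup{\tR{n}}$ is already an antichain of size $m^3$. Concretely, I will aim for $(k,l)=(1,3)$.

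The structural fact I will exploit is that in the tripartite $3$-uniform hypergraph of $\tR{n}$, any two distinct variables lie together in \emph{at most one} triangle $\{x_{ik},y_{kj},z_{ij}\}$ (since any two of the indices $i,j,k$ determine the third, and two variables from the same part lie in no triangle at all). Now fix polynomials $g,h$ with $\Sup{g\prd h}\subseteq\Sup{\tR{n}}$, so that every $s\cup t$ with $s\in\Sup{g}$ and $t\in\Sup{h}$ is the variable set of a triangle. The plan is a case split on a largest member of $\Min{g}$. If $\Min{g}$ contains a triangle $T$, then every $t\in\Sup{h}$ satisfies $T\cup t=T$, so $t\subseteq T$, and $\Min{h}$ is an antichain in $2^{T}$ of Sperner width $3$. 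If the maximum size in $\Min{g}$ is $2$, say $s\in\Min{g}$ with unique enclosing triangle $T_s$, then each $t\in\Sup{h}$ must lie between the missing vertex $T_s\setminus s$ and $T_s$; the four candidate $t$'s form a diamond poset whose widest antichain has size $2$.

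The remaining case, where $\Min{g}$ consists only of $\emptyset$ or singletons, is the main obstacle, because the hypothesis $\Sup{g\prd h}\subseteq\Sup{\tR{n}}$ by itself does not immediately constrain the sizes of members of $\Sup{h}$. If $\emptyset\in\Min{g}$ or $|\Min{g}|\leq 1$, we are done. Otherwise pick two distinct singletons $\{u_1\},\{u_2\}\in\Min{g}$ and require that both $t\cup\{u_1\}$ and $t\cup\{u_2\}$ be triangles for every $t\in\Sup{h}$. A routine split on $|t|\in\{0,1,2,3\}$, each step invoking the unique-triangle-through-a-pair property, rules out $|t|\leq 2$ entirely and forces $|t|=3$ with $t$ the (unique) triangle containing both $u_1$ and $u_2$ if any, so $|\Min{h}|\leq 1$. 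Collecting all cases we get $|\Min{g}|\leq 1$ or $|\Min{h}|\leq 3$, i.e.\ $(1,3)$-freeness, and Theorem~\ref{thm:main2} then delivers $\C(\tR{n})\geq m^3/18=\Omega(n^{3/2})$, completing the proof.
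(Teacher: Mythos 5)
Your proposal is correct and follows essentially the same route as the paper: both reduce to Theorem~\ref{thm:main2} by establishing $(k,l)$-freeness of $\tR{n}$ for constant $k,l$ from the fact that any two variables lie in at most one triangle. The paper organizes the case analysis differently (assuming $|\Min{g}|,|\Min{h}|\geq 2$ and deriving a contradiction) and thereby gets the sharper $(1,1)$-freeness, hence $m^3/2$ instead of your $m^3/18$, but this makes no asymptotic difference.
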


\begin{proof}
   The upper bound $\C(f)=O(m^3)=O(n^{3/2})$ is trivial.  To prove the lower bound $\C(f)=\Omega(m^3)$,
   we will use Theorem~\ref{thm:main2}. Since
   $|\Sup{f}|=m^3$, it is enough to show that $f$ is  $(1,1)$-free.
  To show this, assume that
  $\Sup{g\pr h}\subseteq \Sup{f}$ for some polynomials $g$ and $h$ such that
  $|\Min{g}|\geq 2$ and $|\Min{h}|\geq 2$. Take any two sets $a_1,a_2\in \Min{g}$, and two sets   $b_1,b_2\in \Min{h}$.
  Then all four unions $a_i\cup b_j$ must be triangles (not just contain a triangle).
  Moreover, $a_1$ and $a_2$, as well as $b_1$ and $b_2$ must be incomparable under inclusion.

Case 1: Some of the sets  $a_1,a_2,b_1,b_2$  forms a triangle $T$, say $a_1=T$.
Hence, $b_1$ and $b_2$ lie in $T$, and $a_2\not\subseteq T$ since $a_1$ and $a_2$ must be
 incomparable. Consider the triangles $T_1=a_2\cup b_1$ and $T_2=a_2\cup b_2$.
If $|b_i|\geq 2$ for some $i\in\{1,2\}$, then $|T_i\cap T|\geq |b_i|\geq 2$, implying that
$T_i=T$, and hence, also $a_2\subseteq T$, a contradiction.
So, $b_1=\{e_1\}$ and $b_2=\{e_2\}$ for some edges $e_1\neq e_2$. Since then $|a_2|\geq 2$,
the triangles $T_1$ and $T_2$ are uniquely determined by $a_2$, implying that $T_1=T_2$
must be the same triangle. But this triangle shares two distinct edges $e_1$ and $e_2$ with $T$, implying that
$T_1=T$, and hence also  $a_2\subseteq T$, a contradiction.

Case 2: None of the sets  $a_1,a_2,b_1,b_2$ forms a triangle.
In this case, some of the sets must have exactly two edges, say $a_1=\{e_1,e_2\}$.
Since a triangle is uniquely determined by any two of its
edges, we have that both unions $a_1\cup b_1$ and $a_1\cup b_2$  must form the same triangle
$T=\{e_1,e_2,e_3\}$. The sets  $b_1$ and $b_2$ must be incomparable, and both of them must contain
the ``missing'' edge $e_3$. Since none of these two sets can be a triangle, this implies that
$b_1=\{e_1,e_3\}$ and $b_2=\{e_2,e_3\}$. These two sets also uniquely determine the
same triangle $T$, implying that $a_2\cup b_1=a_2\cup b_2=T$. Thus, $a_2$ must contain both
missing edges $e_1$ and $e_2$ of $T$. But this means that $a_2$ contains the set $a_1$,
 a contradiction with $a_1$ and $a_2$ being incomparable.
\end{proof}

We now turn to the proofs of our main results.

\section{Proof of Theorem~\ref{thm:main0}}

To show that the gap $\CC(\henv{f})/\CC(f)$ can be exponential,
consider the following polynomial in $n=m^2+m$ variables:
\begin{equation}\label{eq:pperm}
\PPPERM{n}(x,y)=\prod_{i=1}^m\sum_{j=1}^m x_{ij}y_j\,.
\end{equation}
The relation to the permanent polynomial $\PERM{}$ is that the coefficient of the monomial
$y_1y_2\cdots y_m$ in $\PPPERM{n}(x,y)$ is exactly $\PERM{m}(x)$.

Now, let $f(x,y)$ be the linearization of $\PPPERM{n}(x,y)$. That is, $f(x,y)$ is a
multilinear polynomial obtained from $\PPPERM{n}(x,y)$ by removing all nonzero exponents from
all monomials.
Every monomial of  $f$ has degree (sum of exponents) between $m+1$ and
$2m$, and the monomials
\[
x_{1,j_1}x_{2,j_2}\cdots x_{m,j_m} y_1y_2\cdots y_m
\]
 of degree $2m$
with all $j_1,\ldots, j_m$ distinct
are exactly the monomials of the polynomial
\[
h(x,y)=\PERM{m}(x)\cdot y_1y_2\cdots y_m\,.
\]
Thus, $h=\henv{f}$ is the higher envelope of $f$. Since $h(x,1,\ldots,1)=\PERM{m}(x)$,
Theorem~\ref{thm:perm} yields
\[
\C(\henv{f})\geq \C(\PERM{m})=2^{\Omega(m)}=2^{\Omega(\sqrt{n})}\,.
\]

On the other hand, since exponents play no role on $0$-$1$ inputs, we have that
$\PPPERM{}(a)=f(a)$ holds for all $0$-$1$ inputs $a$.
Thus, the polynomial $f$ can be counted by
the circuit given by the definition \eqref{eq:pperm} of $\PPPERM{}$. This gives the desired upper bound
$\C(f)=O(m^2)=O(n)$.

To show that the gap $\CC(\lenv{g})/\CC(g)$ can also be exponential,
consider the following polynomial in $n=m^2$ variables $x_{ij}$ given by the formula:
\begin{equation}\label{eq:isol}
\ISOL{n}(x)=\prod_{i=1}^m\prod_{j=m+1}^{2m}\bigg(\sum_{k=m+1}^{2m}x_{ik} \bigg)\bigg(\sum_{l=1}^{m}x_{lj} \bigg)\,.
\end{equation}
The monomials of this polynomial are obtained as follows.
We interpret the variables $x_{ij}$ as edges of a complete bipartite $m\times m$ graph $I\times J$
with parts $I=\{1,\ldots,m\}$ and $J=\{m+1,\ldots,2m\}$.
To get a monomial of $\ISOL{}$, we take, for each node $i\in I$ exactly one
edge $x_{ik}$ incident with $i$, and take, for each node $j\in J$ exactly one
edge $x_{lj}$ incident with $j$. So, every variable has degree at
most $2$. Note that on every $0$-$1$ input $a\in\{0,1\}^n$,
$\ISOL{}(a)=0$ if and only if the graph specified by $a$ has an isolated node.

Let $g$ be the linearization of $\ISOL{n}$.
Every monomial of  $g$ has degree between $m$ and
$2m$, and the monomials of degree $m$ correspond to perfect
matchings. Thus, the lower envelope $\lenv{g}$ of $g$ is just the permanent polynomial, i.e. $\lenv{g}=\PERM{m}$. By Theorem~\ref{thm:perm},
$\CC(\lenv{g})=2^{\Omega(m)}$.

On the other hand, since exponents play no role on $0$-$1$ inputs, we have that
$\ISOL{}(a)=g(a)$ holds for all $0$-$1$ inputs $a$.
Thus, the polynomial $g$ can be counted by
the circuit given by the definition \eqref{eq:isol} of $\ISOL{}$. This gives the desired upper bound
$\C(g)=O(m^3)=O(n^{3/2})$. \qed

\section{Proof of Theorem~\ref{thm:main02}}
\label{sec:reduction}

Recall that the $s$-$t$ path polynomial $f=\PATH{n}$ has one variable $x_{i,j}$ for each edge of a complete undirected graph on $n+2$ nodes $\{s,1,\ldots,n,t\}$.  Each monomial of $f$ corresponds to a simple directed path from node $s$ to node $t$.

The upper bound $\dec(f)=O(n^3)$ of the \emph{decision} complexity of $f=\PATH{n}$
follows from the Bellman--Ford dynamic programming algorithm \cite{bellman,ford}.
  The circuit is constructed recursively by taking
$F_{1,j}=x_{s,j}$ for all $j\in[n]\cup\{t\}$, and using the recursion $F_{l+1,j}=F_{l,j}+\sum_{i\neq j}F_{l,i}\prd x_{i,j}$ for $j\in[n]\cup\{t\}$. Monomials of $F_{l,j}$
correspond to walks from node $s$ to node $j$ passing through at most $l$ edges; one edge may be passed more than once, and each pass counts. The output is the polynomial $F=F_{n+1,t}$.
Since every $s$-$t$ walk contains a simple $s$-$t$ path, and since in deciding $\{+,\prd\}$-circuits we can use the absorption axiom $x+xy=x$, the circuit correctly \emph{decides} $\PATH{n}$. Thus $\dec(\PATH{n})=O(n^3)$.

Our goal is now to show that every $\{+,\prd\}$-circuit \emph{counting} the
$s$-$t$ path polynomial must have exponential size.
We do not have a \emph{direct} proof of this lower bound. Instead, we will
derive this result indirectly by using some known reductions and lower bounds.

Say that a $\{+,\prd\}$-circuit \emph{decides} $f$ \emph{with threshold $T$},
if for every $a\in\{0,1\}^n$,
$F(a)\geq T$ holds precisely when $f(a)\geq 1$. Here, the threshold $T=T(n)$ may depend
on the number $n$ of variables, but not on the input. Note that deciding $\{+,\prd\}$-circuits decide with threshold $T=1$.
Let $\dect(f)$ denote the smallest size of a  $\{+,\prd\}$-circuit deciding~$f$ with
some threshold~$T$.

As defined by Valiant~\cite{valiant79}, and Skyum and Valiant~\cite{SkyumV85},
a polynomial $f(x_1,\ldots,x_n)$ is a \emph{monotone projection} of a polynomial $g(y_1,\ldots,y_m)$  if there exists an assignment $\sigma:\{y_1,\ldots,y_m\}\to\{x_1,\ldots,x_n, 0,1\}$ such that $f(x_1,\ldots,x_n)=g(\sigma(y_1),\ldots,\sigma(y_m))$.
It is clear that then $\dect(f)\leq \dect(g)$.

The \emph{$r$-clique polynomial}, $\Clique{n,r}$, has  $\tbinom{n}{2}$
variables $x_e$, one for each edge $e$ of $K_n$, and has
one monomial $\prod_{e\subseteq S}x_e$ for every subset $S\subseteq [n]$ of size $|S|=r$.
Results of Valiant \cite{valiant79a} imply that, for every $1\leq r\leq n$,  $\Clique{n,r}$ is a monotone projection
of  the Hamiltonian $s$-$t$ path polynomial $\HP{m}$ for $m=n^{O(1)}$; as noted by Alon and Boppana~\cite{AB87}, already $m=25n^2$
is enough in this case.
On the other hand, it is known that, for $r$ about $\sqrt{n}$, the clique polynomial $f=\Clique{n,r}$
requires  $\dect(f)\geq 2^{n^{\Omega(1)}}$ \cite{haken99,pudlak,Juk99}; see, e.g. \cite[Sect. 9.8]{myBFC-book} for a simpler proof. (In fact, this result holds
 for more general circuits where arbitrary monotone real valued functions
  $g:\RR^2\to\RR$ can be used as gates.) Since $\Clique{n,r}$ is a monotone projection of $\HP{m}$, we have that
\[
\dect(\HP{m})\geq \dect(\Clique{n,r})=2^{n^{\Omega(1)}}\,.
\]
It remains therefore to show that
\[
\mbox{$\C(\PATH{m})\geq \dect(\HP{n})$ for $m=n^{O(1)}$.}
\]
This can be shown using a standard reduction of $\PATH{}$ to $\HP{}$.
Let $p=(n+1)\log n$.
Given an input graph $G$ on $n+2$ nodes $\{s,1,2,\ldots,n,t\}$, replace each edge $(u,v)$ by a graph on $2p+2$ nodes ($u$, $v$ and $2p$ new nodes) containing exactly $2^p$
paths of length $p+1$ between $u$ and $v$.
This way, every $s$-$t$ path of length $l$ in $G$ gives $(2^p)^l$ $s$-$t$ paths in the resulting graph  $G'$. This graph has $m=O(pn^2)=O(n^3\log n)$ nodes.

If $G$ has a Hamiltonian $s$-$t$ path (of length $n+1$), then the graph $G'$ has at least
$T:=(2^p)^{n+1}$ $s$-$t$ paths. If $G$ has no Hamiltonian path, then the longest
$s$-$t$ path has at most $n$ edges, and hence, at most $n-1$ inner nodes. The number of $s$-$t$ paths of length $\leq n$ is bounded from above by $n\cdot n^{n-1}=n^{n}$. So, in this case, $G'$ can have at most
$(2^p)^{n}\cdot n^{n}=l\cdot n^n/2^p=T/n$ $s$-$t$ paths.
 We have thus shown that every $\{+,\prd\}$-circuit counting $\PATH{m}$ for $m=\Theta(pn^2)=\Theta(n^3\log n)$ decides  $\HP{n}$ with threshold $T=(2^p)^{n+1}$.
\qed

\section{Proof of Lemma~\ref{lem:monic}}

Let $f(x_1,\ldots,x_n)$ be a polynomial in which each variable $x_i$ has degree at most $t_i$,
and let $S_i\subseteq \NN$ be arbitrary subsets of sizes $|S_i|\geq t_i+1$, $i=1,\ldots,n$.

\begin{clm}[Folklore]\label{clm:uniq}
 The polynomial $f$ is uniquely determined by its values on
 $S_1\times S_2\times \cdots\times S_n$.
\end{clm}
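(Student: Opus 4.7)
The plan is to prove the claim by induction on the number $n$ of variables. The key univariate fact I will use as the base case is that a polynomial in one variable of degree at most $t$ is determined by its values on any set of at least $t+1$ distinct points, since a nonzero polynomial of degree $d$ can have at most $d$ roots. So for $n=1$ the statement is immediate.

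For the inductive step, I will suppose the claim holds for polynomials in $n-1$ variables with the analogous degree bounds. Let $f$ and $g$ be two polynomials with $\deg_{x_i}(\cdot)\leq t_i$ for each $i$, and assume $f(a)=g(a)$ for every $a\in S_1\times\cdots\times S_n$. Setting $h=f-g$, I write $h$ as a polynomial in $x_n$ with coefficients in the other variables:
\[
h(x_1,\ldots,x_n)=\sum_{j=0}^{t_n}h_j(x_1,\ldots,x_{n-1})\,x_n^j\,.
\]
For each fixed tuple $(a_1,\ldots,a_{n-1})\in S_1\times\cdots\times S_{n-1}$, the univariate polynomial $x_n\mapsto h(a_1,\ldots,a_{n-1},x_n)$ has degree at most $t_n$ and vanishes on the set $S_n$ of size at least $t_n+1$, hence is identically zero. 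Therefore each coefficient $h_j$ vanishes on $S_1\times\cdots\times S_{n-1}$.

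Since $h$ inherits the degree bounds from $f$ and $g$, each coefficient $h_j(x_1,\ldots,x_{n-1})$ has degree at most $t_i$ in $x_i$ for $i<n$. Applying the inductive hypothesis to each $h_j$ yields $h_j\equiv 0$ for every $j$, so $h\equiv 0$ and $f=g$ as formal polynomials. The only mildly subtle point is organizing the degree bookkeeping when passing from $h$ to its coefficients $h_j$, but this is immediate from the fact that expanding $h$ in powers of $x_n$ preserves the degree in each of the remaining variables; so there is no real obstacle, and the induction goes through cleanly.
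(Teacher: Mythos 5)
Your proof is correct and follows essentially the same route as the paper's: induction on $n$, expanding in the last variable $x_n$, and reducing to the fact that a nonzero univariate polynomial of degree at most $t_n$ cannot vanish on $t_n+1$ points. The only cosmetic difference is that you argue via the difference $h=f-g$ vanishing identically, while the paper phrases the same step as recovering the coefficients $f_i(a)$ by interpolation; these are equivalent.
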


\begin{proof}
Induction on $n$. For $n=1$, the claim is simply the assertion that a non-zero
polynomial of degree $t_1$ in one variable can have at most $t_1$ distinct roots.
For the induction step, expand the polynomial $f$ by the variable $x_n$:
\[
f(x_1,\ldots,x_n)=\sum_{i=0}^{t_n}f_i(x_1,\ldots,x_{n-1})\cdot x_n^i\,.
\]
For each point $a\in S_1\times\cdots\times S_{n-1}$,
 \[
f(a,x_n)=\sum_{i=0}^{t_n} f_i(a)\cdot x_n^i
\]
is a polynomial of degree at most $t_n$ in one variable, and hence, all its coefficients $f_i(a)$, $i=0,1,\ldots,t_n$ can be recovered knowing the values $f(a,b)$ for all $b\in S_{n+1}$. Knowing the values $f_i(a)$ for all $a\in S_1\times\cdots\times S_{n-1}$ we can, by the
induction hypothesis, recover the polynomials $f_i$, and hence, the original polynomial~$f$.
\end{proof}

Now let $f$ and $h$ be two polynomials on the same set of $n$ variables such that
$f(a)=h(a)$, and hence, also $\lin{f}(a)=\lin{h}(a)$  holds for all $a\in\{0,1\}^n$.
(Recall that $\lin{f}$ is obtained from $f$ by removing all nonzero exponents.)
Since the polynomials $\lin{f}$ and $\lin{h}$ are multilinear, Claim~\ref{clm:uniq} with all
$S_i=\{0,1\}$ yields
$\lin{f}=\lin{h}$ (they must coincide as multilinear polynomials), and hence, also
$\Sup{f}=\Sup{h}$ must hold as well.

Let us now prove the second claim of Lemma~\ref{lem:monic}: if $f$ and $h$ are polynomials
on the same set of variables, then $f$ and $h$ have the same $0$-$1$ roots
if and only if  $\Min{f}=\Min{h}$. The ``if'' part is trivial, because $f(a)>0$ happens precisely when $p(a)=1$ for some monomial $p\in\Min{f}$. To prove the ``only if'' direction, assume that $f$ and $h$ have the same $0$-$1$ roots.
 Our goal is to show that then
 $\Min{f}=\Min{h}$ must hold.

 Assume contrariwise that there is a monomial $p\in \mon{f}$ whose set of variables $X_p$ belongs to $\Min{f}$ but not to $\Min{h}$. If $\msup{q}\not\subseteq \msup{p}$ holds for
 all monomials $q$ of $h$, then we can set all variables in $\msup{p}$ to $1$ and the rest
 to $0$. On the resulting assignment $a=a_p$, we will have $h(a)=0$ but $f(a)\geq p(a)\geq 1$,
 a contradiction.
 Thus, there must be a monomial $q\in\Min{h}$ such that $\msup{q}\subset \msup{p}$; the inclusion must be proper, because $\msup{p}\not\in\Min{h}$. But then on the input $a_q$, we will have
 $f(a_q)=0$ but $h(a_q)\geq q(a_q)\geq 1$, a contradiction again.
\qed

\section{Proof of Lemma~\ref{lem:balan}}

We will need the following two simple and well-known facts.

A \emph{subadditive weighting} of a circuit is an assignment of nonnegative numbers (weights)
to its gates such that the weight of a gate does
not exceed the sum of the weights of its inputs.

\begin{clm}[Folklore]\label{clm:spira}
If the output gate gets weight $m$, and every leaf gets weight at most $2m/3$, then
there is a gate of weight larger than $m/3$ and at most $2m/3$.
\end{clm}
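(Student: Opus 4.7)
The plan is a standard greedy descent from the output gate. First I will record the key consequence of subadditivity: for every internal (fanin-$2$) gate $v$ with children $v',v''$, we have $w(v)\leq w(v')+w(v'')$, and hence $\max(w(v'),w(v''))\geq w(v)/2$. Next I will define a path $v_0,v_1,v_2,\ldots$ by letting $v_0$ be the output gate and $v_{i+1}$ the heavier of the two children of $v_i$ (breaking ties arbitrarily). Because the circuit is finite and acyclic, this path terminates at some leaf $v_N$.

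Assuming $m>0$ (the case $m=0$ is vacuous), we have $w(v_0)=m>2m/3$, while $w(v_N)\leq 2m/3$ by the hypothesis on leaves. Hence there is a smallest index $i\geq 1$ with $w(v_i)\leq 2m/3$, and its immediate predecessor satisfies $w(v_{i-1})>2m/3$. The halving inequality from the first step then gives $w(v_i)\geq w(v_{i-1})/2 > m/3$, so $v_i$ is a gate with $m/3<w(v_i)\leq 2m/3$, as required.

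The main obstacle is essentially none: this is a folklore descent argument. The only points requiring care are that the descent is well-defined (which follows from the fanin-$2$ property of internal gates together with the finiteness and acyclicity of the DAG) and the trivial edge case $m=0$.
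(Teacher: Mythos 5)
Your proposal is correct and is essentially the paper's own argument: descend from the output along the heavier input at each step, stop at the first node of weight at most $2m/3$, and use subadditivity to conclude its weight exceeds half that of its predecessor, hence exceeds $m/3$. The only cosmetic difference is that you locate the light child directly as the first light node on the path, whereas the paper first names the last heavy node and then looks at its two inputs; the edge case $m=0$ you flag is irrelevant since the claim is only invoked with $m\geq 2$.
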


\begin{proof}
 By starting at the output gate, and traversing the circuit
 by always choosing the input of larger weight, we can find a gate $v$ of
 weight $> 2m/3$ such
  that both its inputs $u$ and $w$ have weights at most $2m/3$.
  By the subadditivity of weighting, at least one of the gates $u$ and $w$
  have then weight larger than $(2m/3)/2=m/3$ and at most $2m/3$.
\end{proof}

\begin{clm}[Folklore]\label{clm:contain}
For every gate $u$ in a $\{+,\prd\}$-circuit producing a polynomial $F$, the polynomial
can be written as $F=P\pr Q+R$, where $P$ is the polynomial produced at~$u$.
\end{clm}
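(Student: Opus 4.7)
The plan is to treat the gate $u$ as if it were an extra input to the circuit, and to exploit the resulting expansion. Let $z$ be a fresh indeterminate and let $\tilde C$ be the $\{+,\prd\}$-circuit obtained from the given one by replacing the gate $u$ by a new input labelled $z$. Let $\tilde F(z,x_1,\ldots,x_n)$ denote the polynomial produced by $\tilde C$. Since $\tilde C$ is itself a $\{+,\prd\}$-circuit, $\tilde F$ has nonnegative integer coefficients, and substituting $z:=P$ back restores the original circuit, so $F=\tilde F(P,x_1,\ldots,x_n)$.

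Next, expand $\tilde F$ as a polynomial in $z$:
\[
\tilde F(z,x)=R_0(x)+z\cdot Q_1(z,x),
\]
where $R_0(x)=\tilde F(0,x)$ and $Q_1(z,x)=\sum_{i\geq 1}R_i(x)z^{i-1}$ with $R_i(x)$ the coefficient of $z^i$ in $\tilde F$. Each $R_i$ inherits nonnegative integer coefficients from $\tilde F$, and hence so does $Q_1$. Substituting $z:=P$ yields
\[
F=R_0(x)+P\prd Q_1(P,x)=P\prd Q+R,
\]
with $Q:=Q_1(P,x)$ and $R:=R_0(x)$, both polynomials with nonnegative integer coefficients.

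A fully inductive alternative would prove, by induction on gates $v$ in topological order, that the polynomial $P_v$ produced at $v$ can be written as $P_v=P\prd Q_v+R_v$ with nonnegative $Q_v,R_v$; the base cases handle $v=u$ (take $Q_u=1$, $R_u=0$) and input gates $v\neq u$ (take $Q_v=0$, $R_v=x_i$); at a sum gate $v=v_1+v_2$ one sets $Q_v:=Q_{v_1}+Q_{v_2}$ and $R_v:=R_{v_1}+R_{v_2}$; at a product gate $v=v_1\prd v_2$ one expands $(PQ_{v_1}+R_{v_1})(PQ_{v_2}+R_{v_2})$ and absorbs the three terms containing a factor of $P$ into $P\prd Q_v$. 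Reading off the output gate gives the claim.

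There is no substantive obstacle; the only point to verify is that the decomposition has nonnegative integer coefficients, and this is automatic because the circuit contains only $+$ and $\prd$ gates, so every intermediate polynomial (including $\tilde F$, $R_0$, and $Q_1$) lies in $\NN[z,x_1,\ldots,x_n]$.
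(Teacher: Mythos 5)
Your argument is correct and is essentially the paper's own proof: the paper likewise replaces the gate $u$ by a fresh variable, writes the resulting polynomial as $y\pr H+R$ with $R$ free of $y$ (and $H$ possibly containing $y$), and substitutes $P$ for $y$ — exactly your decomposition $\tilde F=R_0+z\,Q_1(z,x)$ followed by $z:=P$. The inductive alternative you sketch is just a fully spelled-out version of the same idea, so nothing further is needed.
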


(We use capital letters for polynomials only to stress that they are produced by circuits.)

\begin{proof}
  If we replace the gate $u$ by a new variable $y$, the resulting
  circuit produces a polynomial of the form $y\pr H+R$
  for some polynomial $H$, where $R$ does not contain $y$ (albeit $H$ may contain).  It remains to substitute all occurrences
  of the variable $y$ with the polynomial $P$ produced at
  the gate~$u$.
\end{proof}

\begin{proof}[Proof of Lemma~\ref{lem:balan}(i)]
For a polynomial $f$, let $\length{f}$ denote the minimum number of variables in a monomial of $f$. Hence, a product $g\pr h$ of two polynomials is $m$-balanced, if  $m/3<\length{g}\leq 2m/3$. We have to show that, if $\length{f}\geq m$ for $m\geq 2$, then
$\Sup{f}$ is a union of at most $s=\C(f)$ supports of $m$-balanced products of polynomials.

To prove this claim, fix a $\{+,\prd\}$-circuit of size $s=\C(f)$ counting $f$.
  Define the \emph{weight} of a gate $u$ as $\length{P}$, where $P=P_u$
  is the polynomial produced at $u$. Hence, the output gate has weight
  at least $m\geq 2$, and each input gate has weight $1$ (which is $\leq 2m/3$
   since $m\geq 2$). Since this weighting is subadditive,
  Claim~\ref{clm:spira} gives us a gate $u$ with $m/3 < \length{P}\leq 2m/3$.
By Claim~\ref{clm:contain}, we can write the produced by our circuit polynomial $F$ as
a sum $F=P\pr Q + R$. Hence, $\Sup{f}=\Sup{F}=\Sup{P\pr Q}\cup \Sup{R}$, where the product $P\pr Q$ is $m$-balanced.

The polynomial
$R$ is obtained from $F$ by removing some monomials. If
$R$ is empty, then we are done.  Otherwise,
the polynomial $R$ can be produced by a circuit with one gate fewer (gate $u$ is set to constant $0$, and disappears).
Moreover, $\mon{R}\subseteq \mon{F}$ implies that $\length{R}\geq \length{F}\geq m$
still holds. So, we can
repeat the same argument for the polynomial $R$, until
the empty polynomial $R$ is obtained.
\end{proof}

\begin{proof}[Proof of Lemma~\ref{lem:balan}(ii)]
We will now apply Claim~\ref{clm:spira} not to the entire
circuit but to some its sub-circuits.
A \emph{parse-subcircuit} of a circuit $\F$ is obtained by setting to $0$ one of the two
inputs of each sum gate. Such a subcircuit $\F'$ can also be defined inductively as follows.
The output gate of $\F$ is included in $\F'$. If a gate $u$ is already included in $\F'$,
and if $u$ is a sum gate, then exactly one of the inputs to $u$ are included in $\F'$. If
$u$ is a product gate, then both its inputs are included in~$\F'$
(see Fig.~\ref{fig:circuit1}). Note that each parse-subcircuit produces
exactly one monomial in a natural way, and that each monomial of the polynomial
produced by the entire circuit
is produced by at least one parse-subcircuit.

\begin{figure}
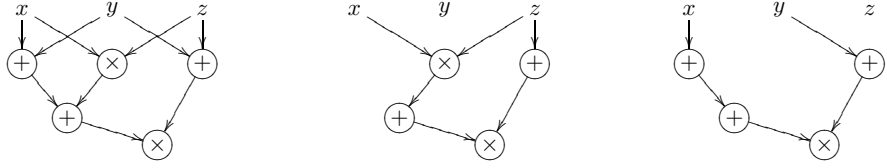

 \[
\scalebox{0.8}{
 \xygraph { !{<0cm,0cm>;<1cm,0cm>:<0cm,-0.9cm>::}
    !{(0,0)}*+{x}="x"
    !{(1.5,0)}*+{y}="y"
    !{(3,0)}*+{z}="z"
    !{(0,1)}*+=[o]+[F]{+}="s1"
    !{(1.5,1)}*+=[o]+[F]{\prd}="s2"
    !{(3,1)}*+=[o]+[F]{+}="s3"
    !{(0.75,2)}*+=[o]+[F]{+}="p1"
    !{(2.25,2.5)}*+=[o]+[F]{\prd}="p2"
    "x":"s1" "x":"s2" "y":"s1" "y":"s3" "z":"s2" "z":"s3" "s1":"p1"
    "s2":"p1" "s3":"p2" "p1":"p2"
    }
    }
    \qquad\qquad
 \scalebox{0.8}{
 \xygraph { !{<0cm,0cm>;<1cm,0cm>:<0cm,-0.9cm>::}
    !{(0,0)}*+{x}="x"
    !{(1.5,0)}*+{y}="y"
    !{(3,0)}*+{z}="z"
    !{(1.5,1)}*+=[o]+[F]{\prd}="s2"
    !{(3,1)}*+=[o]+[F]{+}="s3"
    !{(0.75,2)}*+=[o]+[F]{+}="p1"
    !{(2.25,2.5)}*+=[o]+[F]{\prd}="p2"
     "x":"s2"   "z":"s2"  "z":"s3"
    "s2":"p1" "s3":"p2" "p1":"p2"
    }
    }
    \qquad \qquad
    \scalebox{0.8}{
 \xygraph { !{<0cm,0cm>;<1cm,0cm>:<0cm,-0.9cm>::}
    !{(0,0)}*+{x}="x"
    !{(1.5,0)}*+{y}="y"
    !{(3,0)}*+{z}="z"
    !{(0,1)}*+=[o]+[F]{+}="s1"
    !{(3,1)}*+=[o]+[F]{+}="s3"
    !{(0.75,2)}*+=[o]+[F]{+}="p1"
    !{(2.25,2.5)}*+=[o]+[F]{\prd}="p2"
     "y":"s3" "x":"s1"
    "s1":"p1" "s3":"p2" "p1":"p2"
    }
    }
    \]
    \caption{A circuit and two its parse sub-circuits producing, respectively,
    the monomials $xz^2$ and $xy$.
    }
  \label{fig:circuit1}
\end{figure}

Now let $\F$ be a circuit of size $s=\C(f)$ counting
$f$, and $F$ be the polynomial produced by $\F$. By Lemma~\ref{lem:monic}, we have that $\Sup{f}=\Sup{F}$.
For every monomial $p$ of $F$ of length at least $m$, take some parse-subcircuit $\F_p$ producing $p$, and use
Claim~\ref{clm:spira} to find a gate $u$ in $\F_p$ such that the part $p'$ of $p$
produced at $u$ in $\F_p$ has length $l$ satisfying $m/3<l\leq 2m/3$.
By Claim~\ref{clm:contain}, we can write the polynomial $F$ as
a sum $F=P\pr Q + R$, where $P$ is the polynomial produced at gate $u$ (in the entire circuit).
Hence, $p$ appears $m$-balanced in the
product $R_u=P\pr Q$.
 Since we have at most $s$ products $R_u$, and since
$\mon{R_u}\subseteq \mon{F}$ implies  $\Sup{R_u}\subseteq \Sup{f}$,
we are done.
\end{proof}

\section{Proof of Theorems~\ref{thm:main} and \ref{thm:main01}}

Define the \emph{join} of two families of sets $B$
and $C$ as the family
\[
B\joi C=\{b\cup c\colon b\in B, c\in C\}
\]
of all possible unions. Note that
the support of a product $g\pr h$ of two polynomials  is the join of the
supports of $g$ and $h$. Note also that, if no set of $B$ intersects any set of $C$, then
we have an upper bound
$|A|\leq \ddeg{A}{|b|}\cdot \ddeg{A}{|c|}$ on the size of the join $A=B\joi C$
holding for \emph{all} $b\in B$ and $c\in C$. This holds because then
$|B|=|B\joi \{c\}|\leq \ddeg{A}{|c|}$, and similarly $|C|=|\{b\}\joi C|\leq
\ddeg{A}{|b|}$.
If, however, sets in $B$ and in $C$ intersect, then it may happen that
$|B|\gg  |B\joi \{c\}|$.
Still, also then we have a reasonable upper bound.

\begin{lem}\label{lem:sharp}
  Let $B\joi C$ be a join of two families, and $B\joi C\subseteq A$. Suppose that every set in $B\joi C$ has size at least $m$, and that $B$ or $C$ has a set of size $r$. Then
\[
|B\joi C|\leq \ddeg{A}{r}\cdot \ddeg{A}{m-r}\,.
\]
\end{lem}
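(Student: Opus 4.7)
The plan is to bound $|B\joi C|$ by a two-step counting scheme in which the given $r$-set plays the role of a ``hub'' that compresses both factors. Assume without loss of generality that $B$ contains a set $b_0$ with $|b_0|=r$; the other case is symmetric.

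For each $w\in B\joi C$ I would fix, once and for all, a decomposition $w=b_w\cup c_w$ with $b_w\in B$ and $c_w\in C$, and then introduce the auxiliary set
\[
u(w)=b_0\cup c_w\,.
\]
The first observation is that $u(w)$ lies in $B\joi C\subseteq A$ (take $b_0\in B$ and $c_w\in C$) and contains the fixed set $b_0$ of size $r$. Hence the image of the map $w\mapsto u(w)$ has at most $\ddeg{A}{r}$ elements.

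Next I would bound the fibres of this map. The crucial point is that $u=u(w)$ itself lies in $B\joi C$, so by hypothesis $|u|\geq m$ and therefore $|u\setminus b_0|=|u|-r\geq m-r$. For each $u$ in the image fix some $s_u\subseteq u\setminus b_0$ of size exactly $m-r$. Every $w$ with $u(w)=u$ then satisfies
\[
w\supseteq c_w\supseteq c_w\setminus b_0=(b_0\cup c_w)\setminus b_0=u\setminus b_0\supseteq s_u\,,
\]
so $w$ lies in $A$ and contains the fixed $(m-r)$-element set $s_u$. Consequently the fibre over $u$ has size at most $\ddeg{A}{m-r}$, and multiplying the two bounds gives $|B\joi C|\leq \ddeg{A}{r}\cdot \ddeg{A}{m-r}$.

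I expect the only nonroutine step to be the observation that $u(w)$ itself belongs to $B\joi C$, not merely to $A$; this is what forces $|u|\geq m$ and legitimises the choice of an $(m-r)$-subset $s_u$ of $u\setminus b_0$. Once this is in place the argument is a clean double counting and does not require any further hypothesis on how sets in $B$ and $C$ intersect.
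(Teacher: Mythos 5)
Your proposal is correct and is essentially the paper's own argument: the paper fixes the same $r$-set $b_0\in B$, partitions $C$ according to the value of $b_0\cup c$ (your hub $u(w)$), bounds the number of hubs by $\ddeg{A}{r}$ since each contains $b_0$, and bounds each fibre by $\ddeg{A}{m-r}$ since every set in it contains $u\setminus b_0$, which has at least $m-r$ elements because $u\in B\joi C$. Your explicit choice of an $(m-r)$-subset $s_u$ is just a slightly more careful rendering of that last step.
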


\begin{proof} Assume w.l.o.g. that the family $B$ contains a set $b$ of size $|b|=r$,
and let $A_b=\{b\}\joi C\subseteq A$. Associate with every
  $a\in A_b$ the family
  \[
  C_a=\{c\in C\colon b\cup c=a\}\,.
  \]
  These families
  give a partition of $C$ into $|A_b|$ pairwise disjoint
  subfamilies. Since all sets in $A_b$ contain the set $b$ of size $|b|=r$, we have that
  \[
  |A_b|\leq \ddeg{A}{r}\,.
  \]
   On the other hand, for each $a\in A_b$,
  all sets in $C_a$, and hence, also all sets in $B\joi C_a$ contain
  the set $a\setminus b$ of size $|a\setminus b|\geq m-r$, implying that
  \[
  |B\joi
  C_a|\leq\ddeg{A}{m-r}
  \]
   holds for all
  $a\in A_b$. Now, every set $b'\cup c'$ in $B\joi C$ belongs to $B\joi C_a$ for
  $a=b\cup c'$. So,
  \begin{align*}
  |B\joi C|&\leq \sum_{a\in A_b}|B\joi C_a|\leq \sum_{a\in A_b}\ddeg{A}{m-r}\\
    &\leq |A_b|\cdot \ddeg{A}{m-r}\leq \ddeg{A}{r}\cdot
    \ddeg{A}{m-r}\,. \qedhere
  \end{align*}
\end{proof}

\begin{proof}[Proof of Theorem~\ref{thm:main}]
Let $f=g+h$ be a polynomial such that $\length{g}\geq m\geq 2$, and $\length{h}<m/3$; here, as before, $\length{f}$ denotes the minimum number of variables in a monomial of $f$.
By Lemma~\ref{lem:balan}(ii), there are  $s=\C(f)$  products $P\pr Q$ of polynomials
such that $\Sup{P\pr Q}\subseteq \Sup{f}$, and
every monomial of $g$
appears $m$-balanced in at least one of these products.

\begin{clm}\label{clm:bal}
If $\Sup{P\pr Q}\subseteq \Sup{f}$, and if at least one monomial of $g$
appears $m$-balanced in $P\pr Q$, then $\Sup{P\pr Q}\subseteq \Sup{g}$
and $|\Sup{P\pr Q}|\leq \ddeg{g}{r}\cdot \ddeg{g}{m-r}$ for some
$m/3<r\leq 2m/3$.
\end{clm}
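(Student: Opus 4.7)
The plan is to extract a ``balanced witness'' from the hypothesis, use it to show that every set in $\Sup{Q}$ is large enough to escape $\Sup{h}$, conclude from this that the whole join $\Sup{P\pr Q}$ lies inside $\Sup{g}$, and then feed the result into Lemma~\ref{lem:sharp}. Write $B=\Sup{P}$ and $C=\Sup{Q}$, so that $\Sup{P\pr Q}=B\joi C$.

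First, I would unpack the hypothesis: since some monomial of $g$ appears $m$-balanced in $P\pr Q$, there exist monomials $p\in\mon{P}$ and $q\in\mon{Q}$ with $\Sup{p}\cup\Sup{q}\in\Sup{g}$ and $|\Sup{p}|=l$ for some integer $l$ with $m/3<l\leq 2m/3$. Set $u=\Sup{p}\in B$. For any $c\in C$, we have $u\cup c\in B\joi C\subseteq \Sup{f}=\Sup{g}\cup \Sup{h}$. Since $|u\cup c|\geq|u|=l>m/3$ while every set in $\Sup{h}$ has size strictly less than $m/3$, the set $u\cup c$ cannot belong to $\Sup{h}$, so it lies in $\Sup{g}$ and hence $|u\cup c|\geq m$. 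This forces $|c|\geq m-l\geq m/3$ for every $c\in C$.

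The inclusion $B\joi C\subseteq \Sup{g}$ now follows at once: for arbitrary $b\in B$ and $c\in C$, $|b\cup c|\geq|c|\geq m/3$, so again $b\cup c\notin\Sup{h}$ and therefore $b\cup c\in\Sup{g}$. With this in hand I would apply Lemma~\ref{lem:sharp} to $A=\Sup{g}$ and to the distinguished set $u\in B$ of size $l$: every member of $B\joi C$ has size at least $m$ (being in $\Sup{g}$), so the lemma gives
\[
|B\joi C|\leq \ddeg{g}{l}\cdot \ddeg{g}{m-l},
\]
and taking $r=l\in(m/3,2m/3]$ yields the claim.

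The only real content here is the ``length gap'' built into the hypothesis of Theorem~\ref{thm:main} ($\length{g}\geq m$ and $\length{h}<m/3$): the $m$-balanced witness produces a single set $u\in B$ of length strictly above $m/3$, which is just enough to force every element of $B\joi C$ to have length at least $m/3$ and hence to come from $\Sup{g}$ rather than from $\Sup{h}$. There is no serious obstacle beyond keeping the strict versus non-strict inequalities straight across the threshold $m/3$.
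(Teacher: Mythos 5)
Your proof is correct and follows essentially the same route as the paper: use the $m$-balanced witness together with the length gap between $g$ and $h$ to force $\Sup{P\pr Q}\subseteq\Sup{g}$, then apply Lemma~\ref{lem:sharp} with $A=\Sup{g}$ and $r=l$. The only (cosmetic) difference is in the inclusion step: you first deduce $|c|\geq m-l\geq m/3$ for every $c\in C$ and conclude directly, whereas the paper argues by contradiction, mixing the balanced part $a$ with a hypothetical short part $b'$ to produce a set whose size falls in the forbidden range $(m/3,m)$ and hence outside $\Sup{f}$.
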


\begin{proof}
To show the inclusion $\Sup{P\pr Q}\subseteq \Sup{g}$,
assume contrariwise that there are $a,a'\in \Sup{P}$ and
$b,b'\in \Sup{Q}$ such that $a\cup b\in \Sup{g}$, $m/3 < |a|\leq 2m/3$ but $a'\cup b'\in\Sup{h}$.  Since $|b'|=l$ for some
$l<m/3$, the union $a\cup b'$ has size
$l< m/3 <  |a\cup b'|\leq 2m/3+l <m$,
and hence, cannot belong to $\Sup{f}$, a contradiction with $\Sup{P\pr Q}\subseteq \Sup{f}$. Thus, $\Sup{P\pr Q}$ must lie entirely within $\Sup{g}$, as claimed.

To show the
upper bound on $|\Sup{P\pr Q}|$, let $A=\Sup{g}$, $B=\Sup{P}$ and $C=\Sup{Q}$.
Since $\length{g}\geq m$, and $\Sup{P\pr Q}\subseteq \Sup{g}$, we have that
every set in $B\joi C=\Sup{P\pr Q}$ has at least $m$ elements.
On the other hand, since some monomial of $g$
appears $m$-balanced in $P\pr Q$, some set in $B$ must have $r$ elements, for some
$m/3<r\leq 2m/3$. For this $r$, Lemma~\ref{lem:sharp} yields
$|A\joi B|=|\Sup{P\pr Q}|\leq \ddeg{A}{r}\cdot \ddeg{A}{m-r}$, as desired.
\end{proof}

Thus, every monomial of $g$ belongs to at least one of $s$ products
$P\pr Q$ of polynomials such that $|\Sup{P\pr Q}|\leq \ddeg{g}{r}\cdot \ddeg{g}{m-r}$
for some $m/3<r\leq 2m/r$. By taking such an $r$ maximizing $\ddeg{g}{r}\cdot \ddeg{g}{m-r}$,
the desired lower bound $s\geq |\Sup{g}|/\ddeg{g}{r}\cdot \ddeg{g}{m-r}$
follows.
\end{proof}

\begin{proof}[Proof of Theorem~\ref{thm:main01}]
Recall that our polynomial $f$ has the form $f=g+h$ with $g=\PERM{n}$ and
$h=\sum_{i,j\in[n]}x_{ij}$. Hence, $\length{g}=n$ and $\length{h}=1 < n/3$.
By Theorem~\ref{thm:main}, there is an integer $r$ between $n/3$
 and $2n/3$ such that $\C(f)\geq |\Sup{g}|/\ddeg{g}{r}\cdot\ddeg{g}{m-r}
 \geq n!/r!(n-r)!=2^{\Omega(n)}$.
On the other hand, on every $0$-$1$ input $a$, we have that $f(a)=0$ if and only if $h(a)=0$, because $g(0,\ldots,0)=0$. Hence, the circuit $h$ decides~$f$,
implying that $\dec(f)=\dec(h)\leq n^2$.
\end{proof}

\section{Proof of Theorem~\ref{thm:main2}}

By Claim~\ref{clm:contain}, we know that, for every gate $u$ in a given $\{+,\prd\}$-circuit $\F$, the produced by the circuit polynomial $F$
can be written as $F=P_{u}\pr Q_{u}+R$,
where $P_{u}$ is the polynomial produced at~$u$,
$Q_{u}$ is the polynomial produced ``after'' the \emph{gate} $u$, and
$R$ is the polynomial produced by the circuit after the gate $u$ is replaced with constant~$0$.
For our argument, it will be convenient to introduce the notion of a  polynomial
$Q_{e}$ produced after an \emph{edge} $e=(u,v)$ (see Fig.~\ref{fig:circuit2}):
 \[
 Q_{e}=\begin{cases}
 Q_{v} & \mbox { if $v=u+w$,}\\
 Q_{v}\pr P_{w} & \mbox { if $v=u\prd w$.}
 \end{cases}
 \]
\begin{SCfigure}[10]
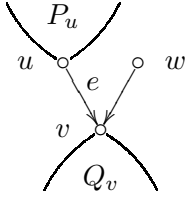

 \xygraph { !{<0cm,0cm>;<1cm,0cm>:<0cm,-0.9cm>::}
    !{(0,0)}*+=[o]{\circ}="u"
    !{(-0.5,0)}*+{u}="uu"
    !{(0,-0.7)}*+{P_u}="P"
    !{(0.4,0.3)}*+{e}="e"
    !{(0.5,1)}*+=[o]{\circ}="v"
    !{(0,1)}*+{v}="vv"
    !{(0.5,1.7)}*+{Q_v}="Q"
    !{(1,0)}*+=[o]{\circ}="w"
    !{(1.5,0)}*+{w}="ww"
    !{(-0.8,-1)}*+{}="a"
    !{(0.8,-1)}*+{}="b"
    !{(-0.3,2)}*+{}="c"
    !{(1.3,2)}*+{}="d"
    "a"-@/_0.1cm/"u"
    "b"-@/^0.1cm/"u"
    "u":"v" "w":"v"
    "v"-@/_0.1cm/"c"
    "v"-@/^0.1cm/"d"
    }
    \caption{For an edge $e=(u,v)$, the polynomial $Q_e$ produced after $e$ is
    the polynomial $Q_e=Q_v$ produced after the gate $v$, if $v=u+w$ is a sum gate,
    and is $Q_e=Q_v\pr P_w$, if $v=u\prd w$ is a product gate, where $P_w$ is the polynomial produced before the gate $w$.}
  \label{fig:circuit2}
\end{SCfigure}
A set $E$ of edges of $\F$ is a \emph{cut},
if every   input-output path in $\F$ contains an edge in $E$.

\begin{clm}\label{clm:cuts}
  If $E$ is a cut, then $\mon{F}$ is a union of $\mon{P_{u}\pr Q_{e}}$ over all edges $e=(u,v)$
  in~$E$.
\end{clm}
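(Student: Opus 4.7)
The plan is to establish the two inclusions in the claimed equality $\mon{F}=\bigcup_{e\in E}\mon{P_{u}\pr Q_{e}}$.

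For the easier inclusion $\mon{P_{u}\pr Q_{e}}\subseteq \mon{F}$ (for each $e=(u,v)\in E$), I would apply Claim~\ref{clm:contain} at the head $v$ of $e$ to write $F=P_{v}\pr Q_{v}+R$. If $v=u+w$ is a sum gate, then $P_{v}=P_{u}+P_{w}$ gives $F=P_{u}\pr Q_{v}+P_{w}\pr Q_{v}+R$; since $Q_{e}=Q_{v}$ in this case, the product $P_{u}\pr Q_{e}$ is a summand of $F$. If $v=u\prd w$ is a product gate, then $P_{v}=P_{u}\pr P_{w}$ gives $F=P_{u}\pr P_{w}\pr Q_{v}+R$; since $Q_{e}=Q_{v}\pr P_{w}$, one has $P_{u}\pr Q_{e}=P_{u}\pr P_{w}\pr Q_{v}$, again a summand of $F$. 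Monotonicity (no cancellation in $\{+,\prd\}$-circuits) then gives $\mon{P_{u}\pr Q_{e}}\subseteq \mon{F}$.

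For the reverse inclusion $\mon{F}\subseteq \bigcup_{e\in E}\mon{P_{u}\pr Q_{e}}$, I plan to use parse-subcircuits (introduced in the proof of Lemma~\ref{lem:balan}(ii)): every monomial $p\in\mon{F}$ is produced by some parse-subcircuit $\f_{p}$, obtained from $\f$ by deleting one input at each sum gate. Since $\f_{p}$ still connects the output to input gates, every input-to-output path in $\f_{p}$ is such a path in $\f$ and hence crosses at least one edge of the cut $E$. Pick an edge $e=(u,v)\in E\cap \f_{p}$ that is highest on some chosen input-to-output path of $\f_{p}$. The portion of $\f_{p}$ at and below $u$ is itself a parse-subcircuit of the subcircuit of $\f$ computing $P_{u}$, and therefore produces a monomial $p_{u}\in\mon{P_{u}}$. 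The remaining ``tail'' of $\f_{p}$ — namely $e$, the part of $\f_{p}$ strictly above $v$, and, at every product gate above $u$ in $\f_{p}$, the sub-parse-subcircuit rooted at the other (sibling) input of that gate — produces a monomial $p_{e}$ with $p=p_{u}\cdot p_{e}$; the recursion defining $Q_{e}$ is engineered so that $p_{e}\in\mon{Q_{e}}$, giving $p\in\mon{P_{u}\pr Q_{e}}$.

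The main obstacle is this last point, verifying $p_{e}\in\mon{Q_{e}}$. I would prove it by an easy induction on the number of gates of $\f_{p}$ strictly above $u$, matching the two-case recursion defining $Q_{e}$: at a sum head $v=u+w$, the parse-subcircuit has no sibling branch at $v$ and the tail contributes only to $\mon{Q_{v}}=\mon{Q_{e}}$; at a product head $v=u\prd w$, the sibling sub-parse-subcircuit rooted at $w$ contributes an extra factor from $\mon{P_{w}}$, which is exactly the extra factor in $Q_{e}=Q_{v}\pr P_{w}$. Once this tail-monomial bookkeeping is in place, the two inclusions combine to yield $\mon{F}=\bigcup_{e\in E}\mon{P_{u}\pr Q_{e}}$.
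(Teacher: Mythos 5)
Your argument for the inclusion $\mon{F}\subseteq\bigcup_{e\in E}\mon{P_{u}\pr Q_{e}}$ is exactly the paper's proof: take a parse-subcircuit producing the monomial, find a cut edge $e=(u,v)$ it crosses, and factor the monomial into a part from $\mon{P_{u}}$ and a tail from $\mon{Q_{e}}$. The reverse inclusion you verify via Claim~\ref{clm:contain} is left implicit in the paper (and is not needed for the application), so your proposal is correct and essentially the same.
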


\begin{proof}
Take a monomial $p$ of the produced polynomial $F$,
and let $\F_p$ be any parse-subcircuit producing $p$. Since $E$ forms a
cut, the graph $\F_p$ must contain some edge $e=(u,v)\in E$.
Then the monomial $p$ has the form $p=p'p''$ where $p'$ is the monomial
produced by the subgraph of $\F_p$ rooted in $u$. Thus $p'$ belongs to the polynomial $P_{u}$
produced in $\F$ before the edge $e$, and  $p''$ belongs to the polynomial $Q_{e}$
produced after the edge~$e$.
Hence, $p$ belongs to $P_{u}\pr Q_{e}$, as desired.
\end{proof}

\begin{proof}[Proof of Theorem~\ref{thm:main2}]
Let $\F$ be a $\{+,\prd\}$-circuit of size $s=\C(f)$ counting $f$, and let $F$ be the polynomial produced by $\F$. By Lemma~\ref{lem:monic}, we know that
 $\Sup{F}=\Sup{f}$. Hence, the polynomial $F$ is also $(k,l)$-free.
We first transform the circuit $\F$ to a circuit $\F'$ as follows.
For every product gate $v=u\prd w$ in $\F$, one of whose inputs, say $u$, is \emph{small} in that  $\m{P_{u}}\leq l$ holds, we remove the edge $(u,v)$ and replace $v$ by
a unary (fanin-$1$) gate $v=P_{u}\prd w$ of ``scalar'' multiplication by this
  fixed (small) polynomial $P_{u}$.  If both inputs produce small
  polynomials, then we eliminate only one of them.
  It is clear that $\F'$ produces the same polynomial $F$.
  In particular,  $\Sup{F'}=\Sup{f}$ holds as well.

Say that  an edge $e=(u,v)$ of $\F'$ is \emph{light}, if $\m{P_{u}\pr Q_{e}}\leq kl^2$.
To finish the proof of the first claim in Theorem~\ref{thm:main2}, it is enough, by Claim~\ref{clm:cuts}, to show that
every input-output path in $\F'$ must contain at least one light edge.

To show this, take an arbitrary input-output path  in $\F'$, and
let $e=(u,v)$ be the last edge along this path such that $\m{P_{u}}\leq k$; hence,
$\m{P_{v}}> k$.
Such an edge must exist because $\m{x_i}=1\leq k$, and since we can assume that $\m{F}>k$
(for otherwise the theorem would trivially hold).
Together with $\Min{P_{v}\pr Q_{v}}\subseteq \Min{F}$ and  $\m{P_{v}}> k$, the $(k,l)$-freeness of $F$ implies that
\[
\m{Q_{v}}\leq l\,.
\]
If $v$ is a sum gate, then $Q_{e}=Q_{v}$, and hence, also $\m{Q_{e}}\leq l$.
So, the edge $e$ is light in this case:
\[
\m{P_{u}\pr Q_{e}}\leq \m{P_{u}}\cdot\m{Q_{e}}\leq kl\,.
\]
So, assume that $v$ is a product gate. Let $u$ and $w$ be the inputs to $v$ in the original
circuit $\F$. Since $\m{P_{u}}\leq k\leq l$, we have that $\m{P_{w}}\leq l$ must hold as well, for otherwise the edge $e=(u,v)$ could not exist in $\F'$ (would be already eliminated when going from $\F$ to $\F'$). Hence,
\[
\m{Q_{e}}=\m{P_{w}\pr Q_{v}}\leq l^2\,.
\]
So, the edge $e$ is light also in this case:
\[
\m{P_{u}\pr Q_{e}}\leq \m{P_{u}}\cdot\m{Q_{e}}\leq kl^2\,.
\]
Since the total number of edges in $\F'$ is at most $2s$,
we have thus shown that the support $\Sup{F'}=\Sup{f}$ is a union of
at most $2s$ families $\Sup{P\pr Q}$ with $\m{P\pr Q}\leq kl^2$.
Since every minimal set of a union of two families must be minimal in at least one of these families, this implies that $\Min{f}$ is contained in (albeit not necessarily equal to) the union of the families $\Min{P\pr Q}$. Hence, the desired lower bound $s\geq |\Min{f}|/2lk^2$.
\end{proof}

\section{Conclusion and Open Problems}

The weakness of monotone arithmetic circuits, i.e. of $\{+,\prd\}$-circuits, \emph{computing}  a given polynomial $f$ is stipulated by the fact that the produced by the circuit polynomial $F$ must just (syntactically) coincide with $f$. In particular, then $\mon{F}=\mon{f}$ must hold. On the other pole are $\{+,\prd\}$-circuits just \emph{deciding} $f$. These are, in fact, monotone boolean circuits, where the idempotence axiom $x^2=x$ as well as the
absorption axiom $x+xy=x$ can be used, and hence, here we only have a weaker property
$\Min{F}=\Min{f}$. While proving lower bounds in the latter (boolean) model is a relatively difficult task, the severe restriction of the former (arithmetic) model makes this task much easier.

In this paper we considered an intermediate model of \emph{counting}  $\{+,\prd\}$-circuits. In this case, it is required that the values of $F$
must coincide with those of $f$ on only $0$-$1$ inputs: on other inputs, the values may be different. Thus, counting circuits are $\{+,\prd\}$-circuits that are allowed to use the
idempotence axiom $x^2=x$ (but not the
absorption axiom $x+xy=x$). These circuits have an intermediate
structural property that $\Sup{F}=\Sup{f}$ must hold (Lemma~\ref{lem:monic}).
We have shown that counting circuits can be exponentially smaller than computing circuits
(Theorem~\ref{thm:main0}), and that  deciding circuits can be exponentially smaller than
counting circuits (Theorem~\ref{thm:main01}).

\begin{table}
  \begin{center}
    \begin{tabular*}{1\textwidth}{l@{\hskip 0.5cm}c@{\hskip 0.5cm}c@{\hskip 0.5cm}c@{\hskip 0.5cm}c}
       \hline\noalign{\smallskip}
      Circuits &  $x+x=x$ & $x^2=x$ & $x+xy=x$ & Property\\
      \noalign{\smallskip}\hline\noalign{\smallskip}
      Computing  & $-$ & $-$ & $-$ & $F=f$\\[0.5ex]
      Counting & $-$ & $\checkmark$ & $-$ & $\lin{F}=\lin{f}$\\[0.5ex]
      Approximating & $\checkmark$ & $\checkmark$& $-$ & $\Sup{F}=\Sup{f}$\\[0.5ex]
      Tropical & $\checkmark$ & $-$ & $\checkmark$ & $\MMin{F}=\MMin{f}$\\[0.5ex]
      Deciding/Boolean & $\checkmark$ & $\checkmark$ & $\checkmark$ & $\Min{F}=\Min{f}$\\
      \noalign{\smallskip}\hline
    \end{tabular*}
    \caption[]{Summary of which axioms are allowed $(\checkmark)$ in which kind of $\{+,\prd\}$-circuits.  The last column indicates what property the produced by a circuit polynomial $F$ must satisfy; here $\lin{f}$ is the
    linearization of $f$ obtained by removing all nonzero exponents.
    Tropical circuits are circuits with
    $x\oplus y=\min(x,y)$ and $x\otimes y=x+y$ functions as gates.
      Finally, $\MMin{f}$ is the set of all \emph{monomials} of $f$ that contain no other monomial of $f$ as a proper factor.
      The property $\MMin{F}=\MMin{f}$ holds only if $f$ is multilinear~\cite{jerrum,juk14}.}
    \label{tab:axioms}
  \end{center}
\end{table}

A next natural question was whether lower-bounds arguments
for the weak (computing) model can be extended to work also for the intermediate (counting) model? We have shown that such an extension is possible for two lower-bounds
arguments (Theorems~\ref{thm:main}--\ref{thm:main2}).
In fact, our proofs of these bounds hold for  $\{+,\prd\}$-circuits that only ``approximate'' a given polynomial $f$ in that
$\Sup{F}=\Sup{f}$ holds for the produced by the circuit polynomial~$F$ (coefficients play no role in our arguments).
Approximating circuits can use both idempotence axioms $x+x=x$ and $x^2=x$.
(Table~\ref{tab:axioms} summarizes the axioms allowed in various types of circuits.)
So, these bounds also hold for $\{\cup,\joi\}$-circuits constructing a given family $A\subseteq 2^X$ of subsets
of a (fixed) finite set. Inputs are single element sets $\{x\}$ with $x\in X$, and gates are set-theoretic union $(\cup)$ and join $(\joi)$ of families.
A special case of Theorem~\ref{thm:main} (for $h=0$) gives that, if every set in $A$ has at least $m\geq 2$ elements, then there is an integer $m/3<r\leq 2m/3$ such that every
$\{\cup,\joi\}$-circuit constructing $A$ must have at least
$|A|/\ddeg{A}{r}\cdot\ddeg{A}{m-r}$ gates.

A ``complementary'' in a sense to counting $\{+,\prd\}$-circuits model, also
lying between computing $\{+,\prd\}$-circuits and deciding $\{+,\prd\}$-circuits,
is that of \emph{tropical} circuits, i.e. $\{\min,+\}$-circuits.  These are $\{+,\prd\}$-circuits, where the sum is interpreted as $\min\{x,y\}$, and the product as $x+y$. Such a circuit \emph{computes} a given
polynomial $f$ of $n$ variables, if $\trop{F}(a)=\trop{f}(a)$ holds for all $a\in\NN^n$, where $\trop{f}$ is the ``tropicalization'' of~$f$:
\[
f(x)=\sum_{e\in \NN^n}c_e\prod_{i=1}^n x_i^{e_i}\qquad \mbox{ turns to }\qquad
\trop{f}(x)=\min_{\substack{e\in \NN^n\\ c_e\neq 0}}\ \sum_{i=1}^n e_ix_i\,.
\]
For example, if
$f=xy^2+ 3y^2z^3$, then $\trop{f}=\min\{x+2y, 2y+3z\}$.
Tropical circuits are important, because many dynamic programming algorithms for minimization
problems are just recursively constructed tropical circuits.

The difference from counting $\{+,\prd\}$-circuits is that now the absorption axiom $x+xy=x$ is allowed, but the idempotence axiom $x^2=x$ is not ($x+x\neq x$ unless $x=0$).
As shown in \cite{jerrum,juk14}, lower bounds for computing $\{+,\prd\}$-circuits
hold also for tropical circuits, as long as the target polynomial $f$ is multilinear:
in this case we have that $\T(f)\geq \CC(\lenv{f})$, where $\T(f)$ is the minimum size of
a tropical circuit computing~$f$. In particular, for polynomials which are multilinear
and homogeneous (all monomials have the same number of
variables), tropical circuits are no more powerful than computing  $\{+,\prd\}$-circuits.
Still, for non-homogeneous polynomials, tropical circuits can be exponentially
more powerful than even counting $\{+,\prd\}$-circuits.
In fact, both gaps $\C(f)/\T(f)$ and $\T(f)/\C(f)$ can be exponential,
meaning that tropical and counting $\{+,\prd\}$-circuits are incomparable.

\begin{prop}
There are multilinear polynomials $f$ and $g$ of $n$ variables such that
both $\C(f)/\T(f)$ and $\T(g)/\C(g)$ are $2^{\Omega(\sqrt{n})}$.
\end{prop}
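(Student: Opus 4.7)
The plan is to exhibit the two separating polynomials by recycling constructions already established in the paper and combining them with standard tropical upper and lower bounds; no new lower-bound argument is needed.

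For the polynomial $f$ witnessing a large $\C(f)/\T(f)$ gap, I would take $f=\PATH{n}$, which is multilinear because its monomials correspond to simple $s$-$t$ paths. Theorem~\ref{thm:main02} already gives $\C(\PATH{n})\geq 2^{n^{\Omega(1)}}$. For the tropical upper bound I would simply reinterpret the Bellman--Ford recursion written out in Section~\ref{sec:reduction}: replacing $+$ by $\min$ and $\prd$ by $+$ in $F_{l+1,j}=F_{l,j}+\sum_{i\neq j}F_{l,i}\prd x_{i,j}$ produces a tropical circuit of size $O(n^3)$ whose output is the shortest $s$-$t$ walk length, and since the inputs are nonnegative the tropical identity $\min(x,x+y)=x$ (the tropical form of $x+xy=x$) collapses walks to simple paths. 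Thus $\trop{F}=\trop{\PATH{n}}$ and $\T(f)=O(n^{3})$, yielding $\C(f)/\T(f)\geq 2^{n^{\Omega(1)}}\geq 2^{\Omega(\sqrt{n})}$.

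For the polynomial $g$ witnessing a large $\T(g)/\C(g)$ gap, I would take the multilinear polynomial already constructed in the proof of Theorem~\ref{thm:main0}: the linearization of $\ISOL{n}$ in $n=m^{2}$ variables. That proof established $\C(g)=O(n^{3/2})$ and identified $\lenv{g}=\PERM{m}$. Because $g$ is multilinear, the relation $\T(g)\geq \CC(\lenv{g})$ cited from \cite{jerrum,juk14} in the introduction of this section applies. Combining this with the trivial $\CC(\lenv{g})\geq \C(\lenv{g})$ and with Theorem~\ref{thm:perm} gives
\[
\T(g)\;\geq\; \CC(\lenv{g})\;\geq\; \C(\PERM{m})\;\geq\; m\cdot 2^{m-1}\;=\;2^{\Omega(\sqrt{n})},
\]
and therefore $\T(g)/\C(g)\geq 2^{\Omega(\sqrt{n})}/O(n^{3/2})=2^{\Omega(\sqrt{n})}$.

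The proof is essentially an assembly: both hard directions were already done in Theorems~\ref{thm:main02} and~\ref{thm:main0}, and the two easy directions are standard. The only mildly subtle point, and the one that needs a sentence of justification rather than being invoked as a black box, is that the Bellman--Ford tropical circuit for $\trop{\PATH{n}}$ really computes a minimum over \emph{simple} paths rather than walks; this is what makes the nonnegativity of inputs crucial and what distinguishes the tropical model (with absorption) from the counting model.
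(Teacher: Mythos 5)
Your second separation (the polynomial $g$) is exactly the paper's own argument: the linearization of $\ISOL{n}$, the identification $\lenv{g}=\PERM{m}$, the inequality $\T(g)\geq\CC(\lenv{g})$ for multilinear polynomials, and Theorem~\ref{thm:perm}. That half is correct and needs no comment.

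The first separation has a genuine quantitative gap. The proposition asks for a gap of $2^{\Omega(\sqrt{n})}$, but Theorem~\ref{thm:main02} only gives $\C(\PATH{n})\geq 2^{n^{\Omega(1)}}$, and the inequality you write, $2^{n^{\Omega(1)}}\geq 2^{\Omega(\sqrt{n})}$, goes the wrong way: $n^{\Omega(1)}$ means $n^{c}$ for \emph{some} constant $c>0$, and the $c$ delivered by the proof of Theorem~\ref{thm:main02} is small, since that bound is inherited from the monotone real-circuit lower bound for $\Clique{}$ through two polynomial-blowup reductions ($\Clique{n,r}$ into $\HP{m}$ with $m=O(n^{2})$, then $\HP{n}$ into $\PATH{m}$ with $m=\Theta(n^{3}\log n)$), pushing the final exponent well below $1/2$. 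So $\PATH{n}$ does witness a superpolynomial, indeed $2^{n^{\Omega(1)}}$, gap $\C(f)/\T(f)$, but not the stated $2^{\Omega(\sqrt{n})}$ one. The paper instead takes $f=\PERM{m}+\sum_{i,j=1}^{m}x_{ij}$ on $n=m^{2}$ variables: Theorem~\ref{thm:main01} gives the direct bound $\C(f)=2^{\Omega(m)}=2^{\Omega(\sqrt{n})}$, while $\T(f)\leq n$ because the tropicalization of $f$ is just $\min_{i,j}x_{ij}$ (the degree-one monomials dominate in the tropical semiring since variables take nonnegative values), which the tropical circuit $\sum_{i,j}x_{ij}$ computes. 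Your Bellman--Ford tropical upper bound $\T(\PATH{n})=O(n^{3})$ is correct and is indeed stated in the paper, so the only repair needed is to swap in the permanent-based polynomial for~$f$.
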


\begin{proof}
To show the first gap, consider the permanent polynomial
$f=\PERM{m}+\sum_{i,j=1}^m x_{ij}$ on $n=m^2$ variables. Theorem~\ref{thm:main01} gives
$\C(f)=2^{\Omega(m)}$. But $\T(f)\leq m^2=n$ because $f$ can be computed by
a tropical circuit $F=\sum_{i,j} x_{ij}$ whose tropicalization is $\trop{F}=\min_{i,j}(x_{ij})$: since variables cannot take negative values, the minimum
will be achieved on a single variable. Thus, $\C(f)/\T(f)=2^{\Omega(m)}$.

To show the second gap, take the multilinear  polynomial $g$ considered in the proof of
Theorem~\ref{thm:main0}.
The polynomial $g$ is the linearization of the polynomial $\ISOL{n}$ on $n=m^2$ variables given by \eqref{eq:isol}, and has $\C(g)=O(n^{3/2})$. On the other hand,
every monomial of  $g$ has degree between $m$ and
$2m$, and the monomials of degree $m$ correspond to perfect
matchings. Thus, the lower envelope $\lenv{g}$ of $g$ is just the permanent polynomial, i.e. $\lenv{g}=\PERM{m}$. Since $\CC(\PERM{m})\geq \C(\PERM{m}) = 2^{\Omega(m)}$
(see Corollary~\ref{cor:bounds1}) and
$\T(g)\geq \CC(\lenv{g})$, the desired lower bound $\T(g)=2^{\Omega(m)}$ follows.
\end{proof}

As we mentioned above, $\T(f)\geq \CC(\lenv{f})$ holds for  every multilinear polynomial $f$. Thus,
if the lower envelope $\lenv{f}$ requires large monotone arithmetic circuits, then
the polynomial $f$ itself requires large tropical circuits. This, however, does not hold for polynomials whose lower envelopes have small $\{+,\prd\}$-circuits.
 An important example in this respect is the $s$-$t$ path polynomial $f=\PATH{n}$.
Even though we have $\CC(f)=2^{\Omega(n)}$ \cite{jerrum},
the lower envelope of $f$  consist of just one variable $x_{s,t}$, implying that
$\CC(\lenv{f})=0$. And indeed, the Bellman--Ford algorithm (see Sect.~\ref{sec:reduction}) gives $\T(f)=O(n^3)$.

\begin{probl}
Does $\T(f)=\Omega(n^3)$ hold for $f=\PATH{n}$?
\end{probl}
This would show that the Bellman--Ford algorithm is optimal, if only Min and Plus operations can be used.
It is worth to mention that the optimality of the other prominent dynamic programming algorithm---that of
Floyd--Warshall \cite{floyd,warshall} for the all-pairs shortest paths problem---is already known. The corresponding
to this problem ``polynomial'' $\ALLPATH{n}$ is actually a \emph{set} of $s$-$t$
path polynomials $\PATH{n}$ for all choices of the source and target nodes $s$ and $t$.
Thus, unlike for $\PATH{n}$, every circuit for $\ALLPATH{n}$ must already have $\Omega(n^2)$ distinct output gates.
The Ford--Warshall algorithm gives $\T(\ALLPATH{n})=O(n^3)$. On the other hand,
Kerr~\cite{kerr} has shown that also $\T(\ALLPATH{n})=\Omega(n^3)$ holds.

 In Sect.~\ref{sec:reduction}, we have shown that the monotone counting complexity of $\PATH{n}$ is exponential in $n$.
 But, unlike bounds given in Sect.~\ref{sec:appl}, our proof for $\PATH{}$ indirect and is based on two rather non-trivial known results:
 the fact that the clique polynomial $\Clique{}$ requires exponential monotone real circuits, and is a projection of the Hamiltonian $s$-$t$ path polynomial $\HP{}$.

\begin{probl}
 Give a direct proof of
$\C(f)=2^{n^{\Omega(1)}}$ for $f=\PATH{n}$.
\end{probl}

Finally, it would be interesting to extend to the case of counting $\{+,\prd\}$-circuits one of the first lower-bounds arguments for computing $\{+,\prd\}$-circuits suggested by
Schnorr in \cite{schnorr}. Namely, he proved that $\CC(f)\geq |\mon{f}|-1$ holds, if the
polynomial $f$ is \emph{separated} in the following sense: for every two monomials
$p\neq q$ of $f$, their product $p\pr q$ does not contain any third monomial $r\not\in\{p,q\}$ of $f$ as a factor (see also \cite[Sect. 8]{juk14} for a somewhat simpler  proof). This criterion allows to easily prove strong lower bounds for some polynomials.
For example, using it, one can easily show that $\CC(f)\geq \tbinom{n}{r}-1$ holds for the $r$-clique polynomial $f=\Clique{n,r}$. This polynomial is separated, because
the union of no two $r$-cliques (sets of edges of complete subgraphs of $K_n$ with $r$ nodes) can contain a third $r$-clique.

\begin{probl}
Can Schnorr's argument for $\CC(f)$ be extended to $\C(f)$?
\end{probl}

\subsection*{Acknowledgments}

I am thankful to Tsuyoshi Ito, Emil Je\v{r}\'abek, and
Igor Sergeev for interesting discussions.

\footnotesize
\def\Bib#1{\vspace{-2pt}\bibitem{#1}}

\end{document}